\definecolor{myblue}{rgb}{0.3, 0.0, 0.85}
\definecolor{myviolet}{rgb}{0.5, 0.0, 0.5}
\theoremstyle{plain}
\newtheorem{lem}{Lemma}[section]
\newtheorem{defn}{Definition}[section]
\newtheorem*{unlem}{Lemma}
\newtheorem*{unprop}{Proposition}
\newtheorem{snthm}{Theorem}
\newtheorem{snprop}{Proposition}
\title{Structure of Noncommutative Solitons: Existence and Spectral Theory}
\author{August J. Krueger}
\address{Amado Building, Technion Math.\ Dept., Haifa, 32000, Israel}
\email{ajkrueger@tx.technion.ac.il}
\author{Avy Soffer}
\address{110 Frelinghuysen Rd., Rutgers Math.\ Dept., Piscataway, NJ 08854, USA}
\email{soffer@math.rutgers.edu}
\begin{document}

\maketitle


\begin{abstract}
We consider the Schr\"odinger equation with a Hamiltonian given by a second order difference operator with nonconstant growing coefficients, on the half one dimensional lattice. This operator appeared first naturally in the construction and dynamics of noncommutative solitons in the context of noncommutative field theory. We construct a ground state soliton for this equation and analyze its properties. In particular we arrive at $\ell^{\infty}$ and $\ell^{1}$ estimates as well as a quasi-exponential spatial decay rate.\\
\ \\
\noindent \textsc{Mathematics subject classification.} 35Q40, 35Q55, 39A05.\\
\ \\
\noindent \textsc{Keywords.} Noncommutative soliton, spectral theory, NLS, DNLS.
\end{abstract}


\section{Introduction and Background}

The notion of noncommutative soliton arises when one considers the nonlinear Klein-Gordon equation (NLKG) for a field which is dependent on, for example, two ``noncommutative coordinates'', $x,y$, whose coordinate functions satisfy canonical commutation relations (CCR) $[X,Y]=i\epsilon$. This follows through the method of deformation quantization, see e.g. \cite{BaFl1} for a review and \cite{BaFl2} for applications. By going to a representation of the above canonical commutation relation, one can reduce the dynamics of the problem to an equation for the coefficients of an expansion in the Hilbert space representation of the above CCR, see e.g. \cite{DJN 1}\cite{DJN 2}\cite{GMS}. By restricting to rotationally symmetric functions the nocommutative deformation of the Laplacian reduces to a second order finite difference operator, which is symmetric, and with variable coefficient growing like the lattice coordinate, at infinity. Therefore, this operator is unbounded, and in fact has continuous spectrum $[0,\infty)$. These preliminary analytical results, as well as additional numerical results, were obtained by Chen, Fr\"ohlich, and Walcher \cite{CFW}. The dynamics and scattering of the (perturbed) soliton can then be inferred from the NLKG with such a discrete operator as the linear part. We will be interested  in studying the dynamics of discrete NLKG and discrete NLS equations with these hamiltonians.


We will be working with a discrete Schr\"odinger operator $L_0$ which can be considered either a discretization or a noncommutative deformation of the radial 2D negative Laplacian, $-\Delta^{\mathrm{2D}}_\mathrm{r} = -r^{-1}\partial_r r \partial_r$. We will briefly review both perspectives.

In 1D one may find a discrete Laplacian via
\begin{align*}
& x \in \mathbb{R} \  \xrightarrow{\ \mathrm{discrete} \ } \  n \in \mathbb{Z}, \quad -\Delta^{\mathrm{1D}} = -\partial^{2}_x \  \xrightarrow{\ \mathrm{discrete} \ } \  -D_{+}D_{-},
\end{align*}
where $D_{+}v(n) = v(n+1) - v(n), D_{-}v(n) = v(n) - v(n-1)$ are respectively the forward and backward finite difference operators. It is important to implement this particular combination of these finite difference operators due in order to ensure that the resulting discrete Laplacian is symmetric. In 2D one may find a discrete Laplacian via
\begin{align*}
&r = (x^{2} + y^{2})^{1/2} =  2 \rho^{1/2},\quad \rho \in \mathbb{R}_+ \  \xrightarrow{\ \mathrm{discrete} \ } \  n \in \mathbb{Z}_+, \\
&-\Delta^{\mathrm{2D}}_{\mathrm{r}} = -r^{-1}\partial_{r}r\partial_{r} = -\partial_\rho \rho \partial_\rho \quad \xrightarrow{\ \mathrm{discrete} \ } \quad -D_{+}MD_{-} = L_{0},
\end{align*}
where $Mv(n) = nv(n)$. For any 1D  continuous coordinate $x$ one may discretize a pointwise multiplication straightforwardly via $v^p(x) \  \xrightarrow{\ \mathrm{discrete} \ } \  v^p(n)$, where $n$ is a discrete coordinate.

One may also follow the so-called noncommutative space perspective. Here one considers the formal ''Moyal star deformation'' of the algebra of functions on $\mathbb{R}^{2}$:
\begin{align*}
\Phi_{1}\cdot\Phi_{2}(x,y) &= \Phi_{1}(x,y)\Phi_{2}(x,y) \\
\xrightarrow{\ \epsilon > 0\ } \quad \Phi_{1} \star \Phi_{2}(x, y) &= \exp[i(\epsilon / 2)(\partial_{x_1}\partial_{y_2} - \partial_{y_1}\partial_{x_2})] \Phi_{1}(x_1,y_1)\Phi_{2}(x_2,y_2)\lfloor_{ (x_j,y_j) = (x,y) }.
\end{align*}
One calls the coordinates, $x,y$, noncommutative in this context because the coordinate functions $X(x,y) = x$, $Y(x,y) = y$ satisfy a nontrivial commutation relation $X\star Y - Y\star X \equiv [X,Y] = i\epsilon$. This prescription can be considered equivalent to the multiplication of functions of $q, p$ in quantum mechanics where operator ordering ambiguities are set by the normal ordering prescription for each product. For $\Phi$ a deformed function of $r = (x^2 + y^2)^{1/2}$ alone: $\Phi = \sum_{n=0}^\infty v(n)\Phi_n$ where $v(n) \in \mathbb{C}$ and the $\{\Phi_{n}\}_{n=0}^\infty$ are distinguished functions of $r$: the projectors onto the eigenfunctions of the noncommutative space variant of quantum simple harmonic oscillator system. One may find for $\Phi$ a function of $r$ alone:
\begin{align*}
-\Delta^{\mathrm{2D}} \Phi &= -\Delta^{\mathrm{2D}}_{\mathrm{r}}\Phi = -r^{-1}\partial_{r}r\partial_{r}\Phi \\
\xrightarrow{\ \epsilon > 0\ }\quad \frac{2}{\epsilon} L_0\Phi_n &= \frac{2}{\epsilon} \left\{ \begin{array}{cc}
		- (n+1)\Phi_{n+1} + (2n + 1)\Phi_{n} - n \Phi_{n-1} &,\quad n > 0 \\
		- \Phi_{1} + \Phi_{0} &,\quad n = 0 .
	\end{array} \right.
\end{align*}
which may be transferred to $\frac{2}{\epsilon} L_0v(n)$, an equivalent action on the $v(n)$, due to the symmetry of $L_0$. Since the $\Phi_n$ are noncommutative space representations of projection operators on a standard quantum mechanical Hilbert space, they diagonalize the Moyal star product: $\Phi_m\star\Phi_n = \delta_{m,n}\Phi_n$. This property is shared by all noncommutative space representations of projection operators. Thereby products of the $\Phi_n$ may be transferred to those of the expansion coefficients: $v(n)v(n) = v^2(n)$.

See  B. Durhuus, T. Jonsson, and R. Nest \cite{DJN 1,DJN 2} (2001) and T. Chen, J. Fr\"ohlich, and J. Walcher \cite{CFW} (2003) for reviews of the two approaches. In the following we will work on a lattice explicitly so $x \in \mathbb{Z}_+$ will be a discrete spatial coordinate.

The principle of replacing the usual space with a noncommutative space (or space-time) has found extensive use for model building in physics and in particular for allowing easier construction of localized solutions, see e.g. \cite{fuzzy physics}\cite{NC soliton survey} for surveys. An example of the usefulness of this approach is that it may provide a robust procedure for circumventing classical nonexistence theorems for solitons, e.g. that of Derrick \cite{Derrick}, in the following sense. The act of deforming spatial coordinates to be noncommuting can be considered as effectively replacing a continuum problem with an effective lattice problem in which simple nonlinearities hold the same form. Lattice systems are often seen to exhibit a wider variety of solitary wave phenomena, e.g. breathers, and possibly enhanced stability. For example, stable solitons are prohibited in the NLKG one time and two commutative spatial coordinates but numerical evidence points to a possible meta-stability thereof for the case with noncommutative spatial coordinates conjectured in \cite{CFW}.

The NLKG variant of the equation we study here first appeared in the context of string theory and associated effective actions in the presence of background D-brane configurations, see e.g. \cite{GMS}. We have decided to look in a completely different direction. The NLS variant and its solitons can in principle be materialized experimentally with optical devices, suitably etched, see e.g. \cite{Segev review}. Thus the dynamics of NLS with such solitons may offer new and potentially useful coherent states for optical devices. Furthermore, we believe the NLS solitons to have special properties, in particular asymptotic stability as opposed to the asymptotic metastability of the NLKG solitons conjectured in \cite{CFW}.


We will be following a procedure for the proof of asymptotic stability which has become standard within the study of nonlinear PDE \cite{Avy NLS}. Crucial aspects of the theory and associated results were established by Buslaev and Perelman \cite{important results 1}, Buslaev and Sulem \cite{important results 2}, and Gang and Sigal \cite{important results 3}. Important elements of these methods are the dispersive estimates. Various such estimates have been found in the context of 1D lattice systems, for example see the work of A.I. Komech, E.A. Kopylova, and M. Kunze \cite{important results 4} and of I. Egorova, E. Kopylova, G. Teschl \cite{1D lattice decay estimates}, as well as the continuum 2D problem to which our system bears many resemblances, see e.g. the work of E. A. Kopylova and A.I. Komech \cite{2D}. Extensive results have been found on the asymptotic stability on solitons of 1D nonlinear lattice Schr\"odinger equations by F. Palmero et al. \cite{important results 5}, P.G. Kevrekidis, D.E. Pelinovsky, and A. Stefanov \cite{important results 6}, as well as S. Cuccagna and M. Tarulli \cite{CucTar}. Typically the literature on 1D lattice NLS systems focuses on cases where the free linear Schr\"odinger operator is given by the negative of the standard 1D discrete Laplacian. Our work is on a different free linear Schr\"odinger operator, $L_0$ defined above, which has some distinguishing properties. Important aspects of the application of these models to optical nonlinear waveguide arrays has been established by H.S. Eisenberg et al. \cite{important results 7}.


This work is the first of a series of papers (\cite{paper 01}, this one, and \cite{paper 03}) devoted to the construction, scattering, and asymptotic stability of radial noncommutative solitons with two noncommuting spatial coordinates. We have chosen to restrict our study to these solutions for a number of reasons: it builds upon the observations and results of \cite{CFW}; the radial cases allow one to work with effective 1D lattices and thereby standard Jacobi operators; for two noncommuting spatial coordinates the free radial system is equivalent to a known Jacobi operator spectral problem; the method proposed is by far the most illustrative for the given restrictions. The three papers are devoted to separate aspects of the problem in order of necessity. The organization of this work is as follows.

In \cite{paper 01} we focus on a key estimate that is needed for scattering and stability, namely the decay in time of solutions of relevant Schr\"odinger operators. Fortunately, for boundary perturbed operators, we find it is integrable, given by $t^{-1}\log^{-2}t$. The proof of this result is rather direct, and employs the generating functions of the corresponding generalized eigenfunctions, to explicitly represent and estimate the resolvent of the hamiltonian at all energies. We follow the general approach established by Jensen and Kato \cite{JenKat} and extended by Murata \cite{Murata} whereby time decay follows largely from the behavior of the resolvent near the threshold. From this one can see that for the chosen weights the estimate we find is optimal and should be optimal in general due to the elimination of the threshold resonance by boundary perturbations, by the generality of the method. We also conclude the absence of positive eigenvalues and singular continuous spectrum.

Previous results for the scattering theory of the associated noncommutative waves and solitons were found by Durhuus and Gayral \cite{noncommutative scattering}. In particular they find local decay estimates for the associated noncommutative NLS. They consider general noncommutative estimates for all for all even dimensions of pairwise noncommuting spaces. We consider radial solutions on 2D noncommutative space by alternative methods and find local decay for both the free Schr\"odinger operator as well as a class of rank one perturbations thereof. Our decay estimates are an improvement on those of \cite{noncommutative scattering} for this restricted class of solutions. An important element of this analysis is the study of the spectral properties of the free and boundary-perturbed Schr\"odinger operator. The boundary-perturbation is crucial to the work as it not only eliminates the threshold resonance of the free operator (thereby improving the time decay) as well as allows one to approximate and control solitons that are large only at the boundary via linear operators. We extend the linear analysis of Chen, Fr\"ohlich, and Walcher \cite{CFW} and reproduce some of their results with alternative techniques.

In this paper we address the construction and properties of a family of ground state solitons. These stationary states satisfy a nonlinear eigenvalue equation, are positive, monotonically decaying and sharply peaked for large spectral parameter. The proof of this result follows directly from our spectral results in this paper by iteration for small data and root finding for large data. The existence and many properties of solutions for a similar nonlinear eigenvalue equation were found by Durhuus, Jonssen, and Nest \cite{DJN 1}\cite{DJN 2}. We utilize a simple power law nonlinearity for which their existence proofs do not apply. We additionally find estimates for the peak height, spatial decay rate, norm bounds, and parameter dependence.

In \cite{paper 03} we focus on deriving a decay rate estimate for the Hamiltonian which results from linearizing the original NLS around the soliton constructed in this paper. We determine the full spectrum of this operator, which is the union of a multiplicity 2 null eigenvalue and a real absolutely continuous spectrum. This establishes a well-defined set of modulation equations \cite{Avy NLS} and points toward the asymptotic stability of the soliton.

In the conclusion of \cite{paper 03} we describe how the results can be applied to prove stability of the soliton we constructed in this paper. The issue of asymptotic stability of NLS solitons has been sufficiently well-studied in such a broad context that the proof thereof is often considered as following straightforwardly from the appropriate spectral and decay estimates, of the kind found in \cite{paper 03}. We sketch how the theory of modulation equations established by Soffer and Weinstein \cite{Avy NLS} can be used to prove asymptotic stability. Chen, Fr\"ohlich, and Walcher \cite{CFW} conjectured that in the NLKG case the corresponding solitons are unstable but with exponential long decay: the so-called metastability property, see \cite{Avy NLKG} . There is a great deal of evidence to suggest that this is in fact the case but a proof has yet to be provided. This will be the subject of future work.

\section{Notation}

Let $\mathbb{Z}_+$ and $\mathbb{R}_+$ respectively be the nonnegative integers and nonnegative reals and $\mathscr{H} = \ell^2(\mathbb{Z}_+,\mathbb{C})$ the Hilbert space of square integrable complex functions, e.g. $v: \mathbb{Z}_+ \ni x \mapsto v(x) \in \mathbb{C}$, on the 1D half-lattice with inner product $( \cdot , \cdot )$, which is conjugate-linear in the first argument and linear in the second argument, and the associated norm $||\cdot||$, where $||v|| = (v,v)^{1/2}$, $\forall v\in\mathscr{H}$. Where the distinction is clear from context $||\cdot|| \equiv ||\cdot||_{\mathrm{op}}$ will also represent the norm for operators on $\mathscr{H}$ given by $||A||_{\mathrm{op}} = \sup_{v \in \mathscr{H}}||v||^{-1}||Av||$, for all bounded $A$ on $\mathscr{H}$. Denote the lattice $\ell^1$ norm by $||\cdot||_1$ where $||v||_1 = \sum_{x=0}^\infty|v(x)|$, $\forall v \in \ell^1(\mathbb{Z}_+,\mathbb{C})$.

We denote by $\otimes$ the tensor product and by $z \mapsto \overline{z}$ complex conjugation for all $z \in \mathbb{C}$. We write $\mathscr{H}^*$ for the space of linear functionals on $\mathscr{H}$: the dual space of $\mathscr{H}$. For every $v \in \mathscr{H}$ one has that $v^* \in \mathscr{H}^*$ is its dual satisfying $v^{*}(w) = (v,w)$ for all $v,w \in \mathscr{H}$.  For every operator $A$ on $\mathscr{H}$ we take $\mathcal{D}(A)$ as standing for the domain of $A$. For each operator $A$ on $\mathscr{H}$ define $A^*$ on $\mathscr{H}^*$ to be its dual and $A^\dag$ on $\mathscr{H}$ its adjoint such that $v^*(Aw) = A^*v^*(w) = (A^\dag v, w)$ for all $v \in \mathcal{D}(A^\dag)$ and all $w \in \mathcal{D}(A)$. Let $\{\chi_{x}\}_{x=0}^{\infty}$ be the orthonormal set of vectors such that $\chi_{x}(x) = 1$ and $\chi_{x_{1}}(x_{2}) = 0$ for all $x_{2} \ne x_{1}$. We write $P_{x} = \chi_{x} \otimes \chi^{*}_{x}$ for the orthogonal projection onto the space spanned by $\chi_{x}$.

We define $\mathscr{T}$ to be the topological vector space of all complex sequences on $\mathbb{Z}_+$ endowed with topology of pointwise convergence, $\mathcal{B}(\mathscr{H})$ to be the space of bounded linear operators on $\mathscr{H}$, and $\mathcal{L}(\mathscr{T})$ to be the space of linear operators on $\mathscr{T}$, endowed with the pointwise topology induced by that of $\mathscr{T}$. When an operator $A$ on $\mathscr{H}$ can be given by an explicit formula through $A(x_{1},x_{2}) = (\chi_{x_{1}},A\chi_{x_{2}}) < \infty$ for all $x_{1},x_{2} \in \mathbb{Z}_{+}$ one may make the natural inclusion of $A$ into $\mathcal{L}(\mathscr{T})$, the image of which will also be denoted by $A$. We consider $\mathscr{T}$ to be endowed with pointwise multiplication, i.e. the product $uv$ is specified by $(uv)(x)=u(x)v(x)$ for all $u,v \in \mathscr{T}$.

We represent the \emph{spectrum} of each $A$ on $\mathscr{H}$ by $\sigma(A)$. We term each element $\lambda \in \sigma(A)$ a \emph{spectral value}. We write $\sigma_{\mathrm{d}}(A)$ for the \emph{discrete spectrum}, $\sigma_{\mathrm{e}}(A)$ for the \emph{essential spectrum}, $\sigma_{\mathrm{p}}(A)$ for the \emph{point spectrum}, $\sigma_{\mathrm{ac}}(A)$ for the \emph{absolutely continuous spectrum}, and $\sigma_{\mathrm{sc}}(A)$ for the \emph{singularly continuous spectrum}. Should an operator $A$ satisfy the spectral theorem there exist scalar measures $\{\mu_{k}\}_{k=1}^{n}$ on $\sigma(A)$ which furnish the associated spectral representation of $\mathscr{H}$ for $A$ such that the action of $A$ is given by multiplication by $\lambda \in \sigma(A)$ on $\oplus_{k=1}^{n}L^{2}(\sigma(A),\mathrm{d}\mu_{k})$. If $\mathscr{H} = \oplus_{k=1}^{n}L^{2}(\sigma(A),\mathrm{d}\mu_{k})$ we term $n$ the \emph{generalized multiplicity} of $A$. For an operator of arbitrary generalized multiplicity we will write $\mu^{A}$ for the associated operator valued measure, such that $A = \int_{\sigma(A)} \lambda\ \mathrm{d}\mu^{A}_{\lambda}$. For each operator $A$ that satisfies the spectral theorem, its spectral (Riesz) projections will be written as $P^{A}_{\mathrm{d}}$ and the like for each of the distinguished subsets of the spectral decomposition of $A$. Define $R^A_\cdot: \rho(A) \to \mathcal{B}(\mathscr{H})$, the resolvent of $A$, to be specified by $R^A_z := (A - z)^{-1}$, where $\rho(A) := \mathbb{C} \setminus \sigma(A)$ is the resolvent set of $A$ and where by abuse of notation $zI \equiv z \in \mathcal{B}(\mathscr{H})$ here.

Allow an \emph{eigenvector} of $A$ to be a vector $v \in \mathscr{H}$ for which $Av = \lambda v$ for some $\lambda \in \mathbb{C}$. Should $A$ admit inclusion into $\mathcal{L}(\mathscr{T})$, we define a \emph{generalized eigenvector} of $A$ be a vector $\phi \in \mathscr{T} \setminus \mathscr{H}$ which satisfies $A\phi = \lambda \phi$ for some $\lambda \in \mathbb{C}$ such that $\phi(x)$ is polynomially bounded, which is to say that there exists a $p \ge 0$ such that $\lim_{x \nearrow \infty}(x+1)^{-p}\phi(x) = 0$. We define a \emph{spectral vector} of $A$ to be a vector which is either an eigenvector or generalized eigenvector of $A$. We define the subspace of spectral vectors associated to the set $\Sigma \subseteq \sigma(A)$ to be the \emph{spectral space over $\Sigma$}.

We write $\partial_z \equiv \frac{\partial}{\partial z}$ and $\mathrm{d}_z \equiv \frac{\mathrm{d}}{\mathrm{d} z}$ respectively for formal partial and total derivative operators with respect to a parameter $z \in \mathbb{R}, \mathbb{C}$.

\section{Review}

We remind the reader that in the following $x \in \mathbb{Z}_+$ will be a discrete variable. In \cite{paper 01} we proved the following.

\begin{defn}
Define $L_0$ to be the operator on $\mathscr{H}$ with action
\begin{align}
	L_0v(x) = \left\{
	\begin{array}{cc}
		- (x+1)v(x+1) + (2x + 1)v(x) - x v(x-1) &,\quad x > 0 \\
		- v(1) + v(0) &,\quad x = 0 .
	\end{array} \right.
\end{align}
and domain $\mathcal{D}(L_0) := \{ v \in \mathscr{H}\ |\ || Mv || < \infty \}$, where $M$ is the multiplication operator with action $Mv(x) = xv(x)$ $\forall v \in \mathscr{T}$.
\end{defn}

\begin{unprop}[1 of \cite{paper 01}]\label{unprop01}
The operator $L_0$ has the following properties.
\begin{enumerate}
	\item $L_0$ is essentially self-adjoint.
	\item $L_{0}$ has generalized multiplicity 1.
	\item The spectrum of $L_0$ is absolutely continuous, $\sigma(L_0) = \sigma_{\mathrm{ac}}(L_0) = [0,\infty)$, and its generalized eigenfunctions are the Laguerre polynomials $\phi_{\lambda}(x) \equiv \phi^{L_0}_{\lambda}(x) = \sum_{k=0}^x \frac{(-\lambda)^k}{k!}\binom{x}{k}$ for choice of normalization $\phi_{\lambda}(0) = 1$.
\end{enumerate}
\end{unprop}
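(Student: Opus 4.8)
The plan is to exploit the fact that $L_0$ is a Jacobi (tridiagonal) operator whose three-term recurrence is exactly the one satisfied by the Laguerre polynomials, and then to read off all three assertions from the classical spectral theory of orthogonal polynomials. Computing the matrix entries $(L_0)_{x,y}=(\chi_x,L_0\chi_y)$ gives the diagonal $b_x=2x+1$ and the symmetric off-diagonals $(L_0)_{x,x+1}=(L_0)_{x+1,x}=-(x+1)$, so that $L_0$ is symmetric on the dense domain of finitely supported sequences. The eigenvalue equation $L_0\phi_\lambda=\lambda\phi_\lambda$ reads $-(x+1)\phi_\lambda(x+1)+(2x+1)\phi_\lambda(x)-x\phi_\lambda(x-1)=\lambda\phi_\lambda(x)$, which is precisely the recurrence $(x+1)L_{x+1}(\lambda)=(2x+1-\lambda)L_x(\lambda)-xL_{x-1}(\lambda)$ for the Laguerre polynomials $L_x$; the $x=0$ relation together with the normalization $\phi_\lambda(0)=1$ forces $\phi_\lambda(1)=1-\lambda=L_1(\lambda)$, and an induction then identifies $\phi_\lambda(x)=L_x(\lambda)=\sum_{k=0}^x\frac{(-\lambda)^k}{k!}\binom{x}{k}$.

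For (1) I would establish essential self-adjointness via Carleman's test: the off-diagonal entries have magnitude $x+1$, so $\sum_x (x+1)^{-1}=\infty$, which places the operator in the limit-point case at infinity and hence makes the symmetric operator essentially self-adjoint on the finitely supported sequences. Equivalently, one may observe that the associated Laguerre moment problem, with orthogonality weight $e^{-\lambda}$ on $[0,\infty)$, is determinate. This is the one place where the growing coefficients must be handled with care, but the logarithmic divergence of $\sum_x (x+1)^{-1}$ leaves ample room.

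For (2) and (3) I would exhibit an explicit unitary equivalence. The Laguerre polynomials are orthonormal with respect to $\mathrm{d}\mu(\lambda)=e^{-\lambda}\,\mathrm{d}\lambda$ on $[0,\infty)$ and form a complete system in $L^2([0,\infty),\mathrm{d}\mu)$, so the map $U:\mathscr{H}\to L^2([0,\infty),\mathrm{d}\mu)$ determined by $U\chi_x=L_x$ is unitary. Rewriting the recurrence as $\lambda L_x(\lambda)=-xL_{x-1}(\lambda)+(2x+1)L_x(\lambda)-(x+1)L_{x+1}(\lambda)$ shows that multiplication by $\lambda$ acts on the basis $\{L_x\}$ exactly as $L_0$ acts on $\{\chi_x\}$, whence $UL_0U^{-1}$ is multiplication by $\lambda$ on $L^2([0,\infty),\mathrm{d}\mu)$. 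Since $\mu$ is a single purely absolutely continuous measure with $\mathrm{supp}\,\mu=[0,\infty)$, this immediately yields generalized multiplicity $1$ together with $\sigma(L_0)=\sigma_{\mathrm{ac}}(L_0)=[0,\infty)$; the cyclicity of $\chi_0$, via the fact that the vectors $L_0^k\chi_0$ span the finitely supported sequences, gives an alternative route to simplicity. To confirm that the $\phi_\lambda$ qualify as generalized eigenvectors in the sense defined above, I would invoke the classical large-degree asymptotics $L_x(\lambda)=O(x^{-1/4})$ for fixed $\lambda>0$, which shows $\phi_\lambda$ is bounded (a fortiori polynomially bounded) yet fails to lie in $\ell^2$.

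The main obstacle I anticipate is not any single computation but the bookkeeping required to invoke the orthogonal-polynomial spectral theorem rigorously for an unbounded operator: one must secure essential self-adjointness first, so that the spectral measure and hence the entire spectral picture is unique, and one must cite the completeness of the Laguerre system in $L^2(e^{-\lambda}\,\mathrm{d}\lambda)$ in order to know that $U$ is onto rather than merely isometric. Both are classical facts, but they are the logical linchpins on which the clean conclusions (2) and (3) rest, so I would be careful to state them as the hypotheses actually used.
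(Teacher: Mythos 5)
Your proposal is correct, and it proves the proposition by a genuinely different route from the one the authors use: the present paper only quotes this result from the companion work \cite{paper 01} (noting that Chen--Fr\"ohlich--Walcher obtained the same facts by yet other methods), and the method of \cite{paper 01}, as described in the introduction, is computational --- generating functions give an explicit representation of the resolvent vector $\psi_z = R^{L_0}_z\chi_0$ in terms of the exponential integrals $E_{k+1}$, and the spectral conclusions come packaged with quantitative control of the resolvent at all energies in the Jensen--Kato style, which is precisely what is consumed later (the dispersive estimate $t^{-1}\log^{-2}t$, and in this paper identities such as $\|\psi_{-a}\|_1 = a^{-1}$ and the Sturm--Liouville kernel representation). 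Your route is instead the classical Jacobi-matrix/orthogonal-polynomial spectral theorem: the recurrence identifies $\phi_\lambda(x)=L_x(\lambda)$ (with the $x=0$ equation serving as the boundary condition forcing $\phi_\lambda(1)=(1-\lambda)\phi_\lambda(0)$), Carleman's criterion $\sum_x (x+1)^{-1}=\infty$ gives the limit-point case and hence essential self-adjointness on the finitely supported sequences, and the unitary $U\chi_x = L_x$ onto $L^2([0,\infty),e^{-\lambda}\,\mathrm{d}\lambda)$ --- surjective exactly because the Laguerre system is complete there --- conjugates $L_0$ into multiplication by $\lambda$, from which multiplicity one and $\sigma(L_0)=\sigma_{\mathrm{ac}}(L_0)=[0,\infty)$ are immediate. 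This buys the full spectral picture with almost no computation, at the price of yielding no explicit resolvent formulas; the two approaches are complementary rather than competing.

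Two small points deserve tightening. First, you prove essential self-adjointness on the finitely supported sequences, whereas the paper takes $\mathcal{D}(L_0)=\{v\in\mathscr{H} : \|Mv\|<\infty\}$; the transfer is routine ($L_0$ is symmetric on $\mathcal{D}(L_0)$ by absolute convergence and summation by parts, and a symmetric extension of an essentially self-adjoint operator has the same closure), but it should be said. Second, the bound $L_x(\lambda)=O(x^{-1/4})$ alone cannot show $\phi_\lambda\notin\ell^2$, which is needed for $\phi_\lambda$ to be a generalized eigenvector in the paper's sense ($\phi\in\mathscr{T}\setminus\mathscr{H}$); for that you need the two-sided Fej\'er asymptotics $L_x(\lambda)\sim \pi^{-1/2}e^{\lambda/2}(\lambda x)^{-1/4}\cos(2\sqrt{\lambda x}-\pi/4)$, whose oscillation averages to a positive constant so that $\sum_x |L_x(\lambda)|^2$ diverges like $\sum_x x^{-1/2}$ (and at $\lambda=0$ one has $\phi_0\equiv 1$ directly). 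With those two repairs your argument is complete.
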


\noindent Chen, Fr\"ohlich, and Walcher determined the above properties for $L_{0}$ in \cite{CFW} via methods which are different from ours.

\begin{defn}
Let $\psi_z \equiv \psi^{L_0}_z := R^{L_{0}}_z\chi_{0}$ for all $z \in \rho(L_{0})$ be the \emph{resolvent vector}, where $\chi_0$ is the orthonormal basis vector supported at lattice site $x = 0$.
\end{defn}

\begin{unlem}[5.2 of \cite{paper 01}]
One has the representation
\begin{align}
	\psi_{z}(x) = e^{-z} \sum_{k=0}^x (-1)^k\binom{x}{k}E_{k+1}(-z),
\end{align}
where
\begin{align}
	E_p(z) := z^{p-1}\int_z^\infty \mathrm{d}t\ e^{-t} t^{-p},\qquad p\in \mathbb{C}, z\in \mathbb{C}\setminus(-\infty,0]
\end{align}
are the \emph{generalized exponential integrals} for which we take the principal branch with standard branch cut $\Sigma = (-\infty, 0]$.
\end{unlem}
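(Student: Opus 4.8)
The plan is to represent $\psi_z$ through the spectral decomposition of $L_0$ supplied by Proposition~\ref{unprop01} and then collapse the resulting integral to the stated finite sum by an elementary moment computation; this route has the advantage that $\ell^2$-membership of $\psi_z$ is built in, so no separate decay estimate is needed to single out the correct solution of $(L_0-z)\psi_z=\chi_0$.

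First I would exploit that $L_0$ is an irreducible Jacobi operator, so its first coordinate vector $\chi_0$ is cyclic, and that by Proposition~\ref{unprop01} it has generalized multiplicity $1$ with generalized eigenfunctions $\phi_\lambda$. The generalized eigenfunction transform $v\mapsto\hat v(\lambda)=\sum_x v(x)\phi_\lambda(x)$ sends $\chi_0$ to the constant $\phi_\lambda(0)=1$, and the spectral measure must be $\mathrm{d}\mu(\lambda)=e^{-\lambda}\mathrm{d}\lambda$ on $[0,\infty)$: this is forced by the isometry requirement $(\chi_m,\chi_n)=\int_0^\infty\phi_\lambda(m)\phi_\lambda(n)\,\mathrm{d}\mu(\lambda)$ together with the classical Laguerre orthonormality $\int_0^\infty\phi_\lambda(m)\phi_\lambda(n)e^{-\lambda}\mathrm{d}\lambda=\delta_{mn}$. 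Because the resolvent acts as multiplication by $(\lambda-z)^{-1}$ in this representation, for $z\in\rho(L_0)=\mathbb{C}\setminus[0,\infty)$ I obtain
\begin{align*}
\psi_z(x)=(R^{L_0}_z\chi_0)(x)=\int_0^\infty\frac{\phi_\lambda(x)}{\lambda-z}\,e^{-\lambda}\,\mathrm{d}\lambda,
\end{align*}
the integral converging absolutely since $\lambda-z$ is bounded away from $0$ on $[0,\infty)$.

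Next I would insert the expansion $\phi_\lambda(x)=\sum_{k=0}^x\frac{(-1)^k}{k!}\binom{x}{k}\lambda^k$, interchange sum and integral, and reduce everything to the single identity
\begin{align*}
m_k:=\int_0^\infty\frac{\lambda^k e^{-\lambda}}{\lambda-z}\,\mathrm{d}\lambda=k!\,e^{-z}E_{k+1}(-z),\qquad k\ge0,
\end{align*}
which I would prove by induction. For $k=0$ the substitution $t=\lambda-z$ (deforming back to the principal-branch contour of $E_1$ by analyticity and decay of the integrand) gives $m_0=e^{-z}\int_{-z}^\infty t^{-1}e^{-t}\,\mathrm{d}t=e^{-z}E_1(-z)$. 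For the step, writing $\frac{\lambda^k}{\lambda-z}=\lambda^{k-1}+z\frac{\lambda^{k-1}}{\lambda-z}$ and using $\int_0^\infty\lambda^{k-1}e^{-\lambda}\mathrm{d}\lambda=(k-1)!$ yields $m_k=z\,m_{k-1}+(k-1)!$. On the other side, integration by parts in the definition of $E_p$ gives the recurrence $p\,E_{p+1}(w)=e^{-w}-w\,E_p(w)$, so at $w=-z$ the quantities $b_k:=k!\,e^{-z}E_{k+1}(-z)$ satisfy $b_k=z\,b_{k-1}+(k-1)!$ with $b_0=m_0$; hence $m_k=b_k$. Substituting the identity back produces
\begin{align*}
\psi_z(x)=\sum_{k=0}^x\frac{(-1)^k}{k!}\binom{x}{k}\,m_k=e^{-z}\sum_{k=0}^x(-1)^k\binom{x}{k}E_{k+1}(-z),
\end{align*}
which is the assertion.

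I expect the main obstacle to be the rigorous justification of the integral representation in the first step, namely the unitarity of the generalized eigenfunction transform and the identification of the spectral measure; this is a Plancherel-type statement that must be extracted from the multiplicity-one content of Proposition~\ref{unprop01} together with Laguerre orthogonality, with attention to convergence of the moment integrals and the contour deformation used for $E_{k+1}(-z)$ when $z$ is off the positive axis. A robust alternative that avoids the spectral machinery is to verify directly that the right-hand side solves the three-term recurrence for $x>0$ and the boundary relation $-\psi_z(1)+(1-z)\psi_z(0)=1$ at $x=0$, using the same $E_p$-recurrence and Pascal's identity, and then to confirm $\ell^2$-membership from the decay of $E_{k+1}(-z)$; this trades the spectral theorem for an explicit decay estimate needed to single out the recessive solution.
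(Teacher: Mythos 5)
Your argument is correct, and---so far as can be judged from this paper, which only quotes Lemma 5.2 from \cite{paper 01} and does not reproduce its proof---it takes a genuinely different route from the one used there. The computation itself checks out: the spectral measure of $L_0$ attached to the cyclic vector $\chi_0$ is indeed the Laguerre weight $e^{-\lambda}\,\mathrm{d}\lambda$ on $[0,\infty)$; the moment recurrence $m_k = z\,m_{k-1} + (k-1)!$ matches the recurrence $p\,E_{p+1}(w) = e^{-w} - w\,E_p(w)$ evaluated at $w=-z$; and the base case $m_0 = e^{-z}E_1(-z)$ is a legitimate contour identification, since the horizontal ray from $-z$ to $+\infty$ never meets the cut $(-\infty,0]$ when $z \notin [0,\infty)$. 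By contrast, the introduction here describes the proof in \cite{paper 01} as one that ``employs the generating functions of the corresponding generalized eigenfunctions, to explicitly represent and estimate the resolvent of the hamiltonian at all energies,'' i.e.\ a direct generating-function computation for $\psi_z$ rather than Plancherel for the Laguerre transform plus a moment identity. Two caveats about what each approach buys. First, your route consumes the full spectral data of the quoted Proposition (essential self-adjointness, multiplicity one, identification of the generalized eigenfunctions), and the step where orthonormality ``forces'' $\mathrm{d}\mu=e^{-\lambda}\mathrm{d}\lambda$ needs one more ingredient than you state: orthonormality of the $\phi_\lambda$ against a measure does not by itself pin the measure down unless the underlying moment problem is determinate, which here follows from essential self-adjointness (or Carleman's condition for the Laguerre weight); you should say this explicitly. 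Second, the logical order in \cite{paper 01} appears to be the reverse of yours---the explicit resolvent representation is there used to \emph{derive} spectral properties (absence of embedded eigenvalues, absolutely continuous spectrum, threshold behavior)---so your proof would risk circularity if transplanted into that paper, although it is perfectly sound in the present context, where the Proposition is available as quoted. Your proposed fallback (verifying the three-term recurrence and the boundary relation $-\psi_z(1)+(1-z)\psi_z(0)=1$ directly, then selecting the recessive solution by a decay estimate) avoids both issues and is probably the closest in spirit to a self-contained replacement for the paper's argument.
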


\section{Results}

Consider the discrete NLS
\begin{align}\label{NLS}
	i\partial_tw = L_0w - |w|^{2\sigma}w,\quad 1 \le \sigma \in \mathbb{Z}
\end{align}
where $w : \mathbb{R}_t \times \mathbb{Z}_+ \to \mathbb{C}$. The existence of a $u: \mathbb{Z}_+ \to \mathbb{C}$ which satisfies the nonlinear finite difference equation
\begin{align}
	L_0u = \zeta u + |u|^{2\sigma}u,
\end{align}
furnishes a stationary state of the discrete NLS of the form $w(t) = e^{-i\zeta t}u$. One expects that, due to the attractive nature of the nonlinearity, a negative ``nonlinear eigenvalue'', $\zeta = -a < 0$, will allow the existence of a sharply peaked, monotonically decaying ``ground state soliton''. We will therefore exclusively look for solutions to
\begin{align}\label{solitoneq}
	L_0u = -a u + u^{p},
\end{align}
where $u: \mathbb{Z}_+ \to \mathbb{R}_{+}$, $a > 0$, and for generality $1 < p \in \mathbb{Z}$. Solutions with these characteristics are self-focusing and tend to be sharply localized. They are therefore termed solitary waves or \emph{solitons} generally.

\begin{snthm}\label{snthm04}
There exists a $\mu_* > 0$ such that for each $ \mu > \mu_*$ there exists a solution to Equation \eqref{solitoneq} with $a = \mu > 0$ and $u = \alpha_{\mu}$, which is:
\begin{enumerate}
	\item positive: $\alpha_{\mu}(x) > 0$ for all $x \in \mathbb{Z}_+$
	\item monotonically decaying: $\alpha_{\mu}(x+1) - \alpha_{\mu}(x) < 0$ for all $x \in \mathbb{Z}_+$
	\item absolutely integrable: $\alpha_{\mu} \in \ell^1$
\end{enumerate}
\end{snthm}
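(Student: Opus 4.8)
The plan is to convert \eqref{solitoneq} with $a=\mu$ into the fixed-point relation $u = R^{L_0}_{-\mu}u^p$, which is legitimate because $-\mu<0$ lies in $\rho(L_0)=\mathbb{C}\setminus[0,\infty)$ by Proposition (1 of \cite{paper 01}), so $R^{L_0}_{-\mu}=(L_0+\mu)^{-1}$ is a bounded, positivity-preserving operator with $\|R^{L_0}_{-\mu}\|=\mu^{-1}$. Equivalently, away from the boundary the equation is the three-term recursion
\begin{align*}
u(x+1) = \frac{(2x+1+\mu)u(x) - x\,u(x-1) - u(x)^p}{x+1},\qquad x\ge 1,
\end{align*}
together with $u(1)=(1+\mu)u(0)-u(0)^p$, so that the whole sequence is fixed by the single shooting parameter $s:=u(0)$. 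First I would locate the correct amplitude scale: the constant sequence $c\equiv\mu^{1/(p-1)}$ solves \eqref{solitoneq} exactly (since $L_0$ annihilates constants), and imposing decay at the first step forces $s$ into the narrow window $\mu^{1/(p-1)}<s<(1+\mu)^{1/(p-1)}$, within which $0<u(1)<u(0)$.

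Next I would run a shooting/root-finding argument in $s$. Over the admissible window the map $s\mapsto(u_s(x))_{x\ge0}$ is continuous, and I expect a clean dichotomy: for $s$ near the lower end the orbit overshoots and grows without bound (it excites the dominant solution of the linearized recursion), while for $s$ near the upper end it undershoots and goes negative at a finite site. Defining the overshoot and undershoot sets and showing they are open, disjoint, and nonempty, connectedness of the interval yields a critical $s_*=s_*(\mu)$ whose orbit is the bounded, everywhere-positive, recessive solution. I would then identify this recessive branch with the decaying resolvent data: for large $x$ the term $u^p$ is negligible against $u$, the equation reduces to $(L_0+\mu)u=0$, and its unique (up to scale) decaying solution is realized by $\psi_{-\mu}=R^{L_0}_{-\mu}\chi_0$.

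The three stated properties would then follow. Positivity and strict monotone decay (2) propagate along the recursion from the window condition $u(1)<u(0)$ by an induction/comparison argument exploiting the sign pattern of the three-term coefficients; the hardest of these to make global is monotonicity, which I would control by tracking the difference $u(x)-u(x+1)$ through the recursion. Absolute integrability (3) would come from the quasi-exponential spatial decay of the recessive branch: using the representation $\psi_{-\mu}(x)=e^{\mu}\sum_{k=0}^x(-1)^k\binom{x}{k}E_{k+1}(\mu)$ from Lemma (5.2 of \cite{paper 01}) one reads off summable decay, and a contraction/iteration in the tail $x\ge N$ (the small-data regime) shows the soliton inherits that rate, placing $\alpha_\mu\in\ell^1$.

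I expect the main obstacle to be establishing the shooting dichotomy rigorously and uniformly: one must show that off the critical value every orbit is forced into exactly one of overshoot or undershoot, which requires monotone dependence of $u_s(x)$ on $s$ and comparison estimates that persist for all $x$, not merely near the boundary. Closely tied to this is producing a single threshold $\mu_*$ valid for all large $\mu$ simultaneously, which I would obtain through the amplitude rescaling $u=\mu^{1/(p-1)}v$, turning the equation into $v=\mu R^{L_0}_{-\mu}v^p$ with $\mu R^{L_0}_{-\mu}\to I$ strongly, so that the $\mu$-dependence is pushed into uniformly controllable remainders.
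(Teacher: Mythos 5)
Your setup (the resolvent reformulation, the constant solution $c\equiv\mu^{1/(p-1)}$, and the window $\mu^{1/(p-1)}<u(0)<(1+\mu)^{1/(p-1)}$) matches the paper's, but the core of your existence argument --- the shooting dichotomy --- is both unproven and, as stated, wrong, and it is the entire content of the proof. Write $V(r)=\mu r-r^p$ and use the summed form of the equation, $u(x+1)-u(x)=(x+1)^{-1}\sum_{y=0}^{x}V(u(y))$. Whenever $u(x)>\mu^{1/(p-1)}$ the forcing $V(u(x))$ is negative (restoring), and if $u(x)$ is ever large the recursion $u(x+1)=\bigl[(2x+1+\mu)u(x)-xu(x-1)-u(x)^p\bigr]/(x+1)$ makes $u(x+1)$ \emph{negative}, not large: no orbit ``grows without bound.'' The actual alternative to going negative is damped oscillation about, and convergence toward, the constant solution $\mu^{1/(p-1)}$, i.e.\ staying positive but bounded away from zero; so your ``overshoot'' set is empty, both of your open sets fail to cover the complement of the critical value, and connectedness of the interval yields nothing. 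Your linearization heuristic points the same way: about the constant the linearized equation is $L_0v=\mu(p-1)v$, and $\mu(p-1)>0$ lies in $\sigma_{\mathrm{ac}}(L_0)=[0,\infty)$, where solutions oscillate with Bessel-type behavior $\phi_\lambda(x)\sim e^{\lambda/2}J_0(2\sqrt{\lambda x})$; there is no dominant growing mode to ``excite.'' Finally, the monotone dependence of $u_s(x)$ on the shooting parameter $s$, which you yourself flag as necessary, is genuinely unavailable for the forward recursion: the map $(u(x-1),u(x))\mapsto u(x+1)$ is not order-preserving, since $u(x-1)$ enters with a negative coefficient and $-u(x)^p$ is decreasing in $u(x)$.

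For contrast, the paper avoids shooting entirely, and this is exactly what lets it bypass the obstacles above. It fixes $b=u(0)$ in the same window and builds only the tail by the iteration $u_{n+1}=\widehat{\psi}_{-a}b^p+\widehat{R}^{L_0}_{-a}u_n^p$ with $u_0=0$; positivity of the resolvent kernel and the identity $\|\psi_{-a}\|_1=a^{-1}$ give the scalar majorant $s_{n+1}=a^{-1}(a+1)^{1/(p-1)}+a^{-1}s_n^p$, which is contractive on small data for $a$ large, so the tail limit $u_*(a,b)$ exists in $\ell^1$ (this gives property (3) for free). Because the limit is a power series in $b$ with positive coefficients, $u_*(a,b;1)$ \emph{is} monotone in $b$ --- the iteration supplies the monotone parameter dependence that forward shooting does not. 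Existence then reduces to the single scalar boundary equation $0=b^p-(a+1)b+u_*(a,b;1)$, solved by the intermediate value theorem plus monotonicity, and monotone decay (2) comes from the Durhuus--Jonsson--Nest comparison argument \cite{DJN 1}: if the orbit ever turned upward while small it would have to keep rising past $b_-(a)=a^{1/(p-1)}$, contradicting the a priori bound $s_*(a)<b_-(a)$ valid for large $a$. If you wish to salvage a shooting proof you must (i) replace ``grows without bound'' by ``stays bounded away from zero,'' (ii) prove that set and the ``goes negative'' set are open, nonempty, and exhaust the off-critical parameters, and (iii) show the borderline orbit decays monotonically to zero --- each a substantial argument absent from your proposal.
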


\begin{defn}
We define $P:= I - P_{0}$, where $P_0 := \chi_0 \otimes \chi_0^*$, and write $\widehat{v} \equiv P v$ for all $v \in \mathscr{H}$ and write $\widehat{A} \equiv P A$ for all $A \in \mathcal{B}(\mathscr{H})$.
\end{defn}

The proof of Theorem \ref{snthm04} will proceed as follows:
\begin{enumerate}
\item
	Consider Equation \eqref{solitoneq}. Split this equation into a boundary piece and a tail piece by applying $P_{0}$ and $P$ respectively.
\item
	We take $b := u(0)$ to be a fixed constant and iterate the tail piece of Equation \eqref{solitoneq} via
	\begin{align}
		u_{n+1}(a,b) = \widehat{\psi}_{-a}b^p + \widehat{R}^{L_0}_{-a}u^p_n(a,b),
	\end{align}
	$u_n(a,b) \equiv u_n(a,b,\cdot) \in \mathscr{H}$ for all $n$. We show that for large enough $a$ this iteration in $n$ converges pointwise monotonically and that $|| u_n(a,b) ||_{1} \leq s_n(a)$ is bounded as $n \nearrow \infty$, for a sequence of constants $\{s_n(a)\}_{n=0}^\infty$. We define the limit of this iteration to be $u_*(a,b) \equiv \lim_{n \nearrow \infty}u_n(a,b)$.
\item
	The construction of $u_*(a,b)$ sets $u(1) = u_*(a,b;1) \equiv q(a,b)$. We substitute this into the boundary piece of Equation \eqref{solitoneq} which then takes the form
	\begin{align}
		0 &= b^p - (a+1)b + q(a,b).
	\end{align}
	We will now take $b$ to be a variable. If the solution, $u$, is positive and monotonically decaying then one must have
	\begin{align}
		0 < u(1) < u(0) \quad \Rightarrow \quad a^{(p-1)^{-1}} < b < (a+1)^{(p-1)^{-1}} .
	\end{align}
	We show that for all $a$ sufficiently large there is a unique $b = b_*(a) \in (a^{(p-1)^{-1}},(a+1)^{(p-1)^{-1}})$ which solves the boundary equation.
\item
	We define the solution we desire, $\alpha_{\mu}$, by
	\begin{align}
		\alpha_{\mu}(x) := \left\{
		\begin{array}{cr}
			b_*(\mu) &,\ x=0\\
			u_*(\mu,b_*(\mu);x) &,\ x > 0
		\end{array}\right. .
	\end{align}
\item
	The three properties (i.e. positivity, monotonicity, $\ell^1$) of the solution will then be verified in turn.
\end{enumerate}

\noindent Typically one can arrive at the existence of a soliton with such properties via variational or rearrangement arguments. We will use much more elementary techniques which yield yet other properties due to the dependence on explicit constructions. One such result which will be of use later on is

\begin{snprop}\label{snprop02}
\begin{align}
	||\widehat{\alpha}_{\mu}||_{1} \le \mu^{-(p-1)^{-1}(p-2)} + \mathcal{O}(\mu^{-(p-1)^{-1}(2p-3)}) .
\end{align}
\end{snprop}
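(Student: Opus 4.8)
The plan is to apply the $\ell^1$ norm to the fixed-point equation characterizing the tail. By construction $\widehat{\alpha}_\mu = u_*(\mu,b_*(\mu))$, which satisfies $u_* = \widehat{\psi}_{-\mu}b_*^p + \widehat{R}^{L_0}_{-\mu}u_*^p$ (the $n \nearrow \infty$ limit of the iteration of Step 2). The triangle inequality splits $\|\widehat{\alpha}_\mu\|_1$ into a linear source term $\|\widehat{\psi}_{-\mu}\|_1\,b_*^p$ and a nonlinear term $\|\widehat{R}^{L_0}_{-\mu}u_*^p\|_1$. I will show that the source term already produces the full right-hand side, both leading coefficient and error, while the nonlinear term is of strictly higher order.

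For the source term I would first convert the given series for $\psi_z$ into an integral. Using $E_{k+1}(a) = \int_1^\infty e^{-as}s^{-(k+1)}\,\mathrm{d}s$ together with $\sum_{k=0}^x(-1)^k\binom{x}{k}s^{-k} = (1-1/s)^x$, the substitution $s = 1+r$ gives the closed form
\begin{align*}
	\psi_{-a}(x) = \int_0^\infty e^{-ar}\frac{r^x}{(1+r)^{x+1}}\,\mathrm{d}r .
\end{align*}
Summing the resulting geometric series over $x \ge 1$ collapses the tail norm to
\begin{align*}
	\|\widehat{\psi}_{-a}\|_1 = \int_0^\infty e^{-ar}\frac{r}{1+r}\,\mathrm{d}r \le \int_0^\infty e^{-ar}r\,\mathrm{d}r = a^{-2}.
\end{align*}
Combined with the upper bound $b_*(\mu) < (\mu+1)^{(p-1)^{-1}}$ from Step 3 and the expansion $(\mu+1)^{p/(p-1)} = \mu^{p/(p-1)}(1 + \mathcal{O}(\mu^{-1}))$, the source term is bounded by $\mu^{-2}(\mu+1)^{p/(p-1)} = \mu^{-(p-1)^{-1}(p-2)} + \mathcal{O}(\mu^{-(p-1)^{-1}(2p-3)})$, since $-2 + p/(p-1) = -(p-1)^{-1}(p-2)$ and the subleading correction carries exactly one extra power $\mu^{-1}$, landing at the exponent $-(p-1)^{-1}(2p-3)$.

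For the nonlinear term I would first establish the sharp bound $\|R^{L_0}_{-\mu}\|_{1\to1} = \mu^{-1}$. The key is the telescoping identity $\sum_x (L_0 f)(x) = 0$, valid for sufficiently decaying $f$ as a consequence of the divergence form $L_0 = -D_+MD_-$ (the boundary factor $x$ kills the lower endpoint), whence $\sum_x[(L_0+\mu)f](x) = \mu\sum_x f(x)$. Applying this to $f = R^{L_0}_{-\mu}u_*^p \ge 0$ (positivity of the resolvent, already used in the positivity/monotonicity part of Theorem \ref{snthm04}) yields $\|R^{L_0}_{-\mu}u_*^p\|_1 = \mu^{-1}\|u_*^p\|_1$. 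Monotonicity (item (ii), $u_*(x) \le u_*(1)$ for $x \ge 1$, with $u_*(0) = 0$) then gives $\|u_*^p\|_1 = \sum_{x\ge1}u_*(x)^p \le u_*(1)^{p-1}\|u_*\|_1$. Feeding in the a priori bounds $u_*(1),\,\|u_*\|_1 = \mathcal{O}(\mu^{-(p-1)^{-1}(p-2)})$ shows the nonlinear term is $\mathcal{O}(\mu^{-1-(p-2)-(p-1)^{-1}(p-2)})$, an exponent that is $(p-2) \ge 0$ below $-(p-1)^{-1}(2p-3)$, hence negligible against the stated error.

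The main obstacle is the self-referential nature of the nonlinear term: its bound involves $\|u_*\|_1$ and $u_*(1)$, the very quantities being estimated, so one must run a short bootstrap — insert a crude a priori bound (from the finite $\ell^1$ majorant $s_n(\mu)$ of the iteration in Step 2) to control $\|u_*^p\|_1$, deduce the improved estimate, and check that the effective contraction constant $\mu^{-1}u_*(1)^{p-1} \ll 1$ keeps the argument consistent. The remaining work is purely exponent bookkeeping, verifying that every correction beyond the leading $\mu^{-(p-1)^{-1}(p-2)}$ — both the $(\mu+1)$-versus-$\mu$ discrepancy in the source and the entire nonlinear contribution — sits at or below $\mu^{-(p-1)^{-1}(2p-3)}$.
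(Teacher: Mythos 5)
Your proposal is correct, and it rests on the same scalar inequality as the paper's proof, but it rebuilds that inequality by a noticeably different route. The paper's proof is a one\-/liner: it invokes the majorant sequence already constructed in the existence argument, $\|u_n(\mu,b)\|_1 \le s_n(\mu) \nearrow s_*(\mu)$, and expands the fixed\-/point relation $s_*(\mu) = \mu^{-1}(\mu+1)^{(p-1)^{-1}} + \mu^{-1}s_*^p(\mu)$; the fact that the nonlinear term is subleading is encoded there in $s_*(\mu)=s_-(\mu) < s_{\mathrm{min}}(\mu) = (\mu/p)^{(p-1)^{-1}}$, which forces $\mu^{-1}s_*^{p-1}(\mu) < 1/p$. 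You instead apply the $\ell^1$ norm to the limit equation for $u_*$ itself and re\-/derive both ingredients: your telescoping identity $\sum_x (L_0 f)(x) = 0$, yielding $\|R^{L_0}_{-\mu} g\|_1 = \mu^{-1}\|g\|_1$ for $g \ge 0$, is a genuine addition --- the paper's majorant lemma silently uses the bound $\|\widehat{R}^{L_0}_{-a}u_n^p\|_1 \le a^{-1}\|u_n^p\|_1$ but only ever proves the single\-/column case $\|\psi_{-a}\|_1 = a^{-1}$, so your argument supplies a missing justification; and your integral\-/representation bound $\|\widehat{\psi}_{-\mu}\|_1 \le \mu^{-2}$ parallels the paper's $\|\widehat{\psi}_{-a}\|_1 < a^{-1}(a+1)^{-1}$ (obtained from $\psi_{-a}(0) = e^a E_1(a) > (a+1)^{-1}$), either version giving the source term $\mu^{-(p-1)^{-1}(p-2)}\bigl(1+\mathcal{O}(\mu^{-1})\bigr)$ with the required leading coefficient $1$. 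One point you should make explicit when running the bootstrap: the usable crude bound is $\|u_*\|_1 \le s_*(\mu) < s_{\mathrm{min}}(\mu) = (\mu/p)^{(p-1)^{-1}}$, i.e.\ strictly below the critical value, so that the first\-/pass contraction constant is $\mu^{-1}\|u_*\|_1^{p-1} < 1/p < 1$; a weaker a priori bound growing faster in $\mu$ would make the first pass vacuous. A second pass then shrinks the constant to $\mathcal{O}(\mu^{-(p-1)})$, and since $(p-1)(p-2) \ge 0$ the resulting correction is absorbed into the stated error, preserving leading coefficient $1$; note also that at $p=2$ your nonlinear term sits exactly at, not below, the error order $\mu^{-(p-1)^{-1}(2p-3)}$, which is still consistent with the claim.
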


\noindent The spatial decay rate of $\alpha_\mu$ is generic for positive solutions vanishing at infinity.

\begin{snthm}\label{snthm2}
Consider the equation
\begin{align}
	(L_0+a)u = V(u)u,
\end{align}
where $V(\cdot): \mathbb{R} \rightarrow \mathbb{R}$ is continuous, locally bounded, and satisfies $\lim_{r \searrow 0} V(r)r = 0$. If $u$ is a solution of this equation which is positive for all $x$ and for which $\lim_{x \nearrow \infty} u(x) = 0$ then $u(x) \sim c_0 x^{-1/2}e^{- c_1 \sqrt{x}}$ as $x \nearrow \infty$ for some $0 < c_0, c_1 < \infty$, i.e. $c'_0 x^{-1/2}e^{- c'_1 \sqrt{x}} \le u(x) \le c''_0 x^{-1/2}e^{- c''_1 \sqrt{x}}$ for some $0 < c'_0, c''_0, c'_1, c''_1 < \infty$, and for each fixed $a > 0$.
\end{snthm}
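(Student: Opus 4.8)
The plan is to reduce the statement to control of the one–step ratio $R(x):=u(x+1)/u(x)$, which is well defined and positive since $u>0$, and to show that $R(x)$ is trapped in a band of width $O(x^{-1/2})$ just below $1$; summing $\log R$ then delivers the quasi–exponential bound. Writing $(L_0+a)u=V(u)u$ at a site $x>0$ and dividing by $u(x)$ gives the Riccati–type recurrence
\begin{align}\label{ratio}
    (x+1)R(x) = 2x+1+a-\widetilde V(x) - \frac{x}{R(x-1)}, \qquad \widetilde V(x):=V(u(x)).
\end{align}
Since $u(x)\to 0$, the tail values eventually lie in a fixed small interval $(0,\varepsilon]$; continuity and local boundedness of $V$ then bound $\widetilde V$ there, and the hypothesis $\lim_{r\searrow 0}V(r)r=0$ guarantees that the actual source $V(u(x))u(x)$ is negligible against the $O(xu(x))$ terms. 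Thus \eqref{ratio} is a bounded perturbation of the constant–coefficient relation obtained by freezing $a-\widetilde V$, and (shrinking $\varepsilon$) one may arrange $a-\widetilde V(x)$ to stay in a fixed compact subinterval of $(0,\infty)$.

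First I would record the formal balance. Substituting the ansatz $R(x)=1-\kappa\,x^{-1/2}+O(x^{-1})$ into \eqref{ratio}, the orders $x$ and $x^{1/2}$ cancel identically, while the order $x^{0}$ equation forces $\kappa^{2}=a-\widetilde V(x)$. Hence the decaying branch obeys $R(x)=1-\sqrt{a-\widetilde V(x)}\,x^{-1/2}+O(x^{-1})$; this is the degenerate (coalescing–root) regime of discrete Liouville--Green/Birkhoff asymptotics, the common source of both the stretched exponential $e^{-c_1\sqrt x}$ and the power prefactor. For the soliton nonlinearity $V(r)=r^{p-1}$ one has $\widetilde V(x)=u(x)^{p-1}\to 0$, recovering the clean rate $c_1=2\sqrt a$; in general the rate is pinned only to the interval $[\,2\sqrt{a-\sup\widetilde V},\,2\sqrt{a-\inf\widetilde V}\,]$, which is exactly why the assertion tolerates distinct exponential constants $c_1',c_1''$.

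The technical heart is to promote this formal balance to rigorous two–sided control by a barrier induction on \eqref{ratio}. I would fix $c''<2\sqrt{a-\sup\widetilde V}$ and $c'>2\sqrt{a-\inf\widetilde V}$, assume $R(x-1)$ lies in the band $[\,1-\tfrac{c'}{2}x^{-1/2},\,1-\tfrac{c''}{2}x^{-1/2}\,]$, substitute the two endpoints into the right–hand side of \eqref{ratio} via $x/R(x-1)=x+\tfrac{c}{2}\sqrt x+O(1)$, and verify that $R(x)$ again lands in the corresponding band at $x+1$; the $O(1)$ remainder (carrying $a$, $\widetilde V$, and the curvature term) is lower order than the $x^{-1/2}$ gap and so cannot break the trapping for $x$ large. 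A separate step shows that a positive solution with $u(x)\to 0$ cannot ride the growing local branch (which behaves like $x^{-1/4}e^{+2\sqrt{ax}}$), so it enters the recessive band. As an independent check and alternative route, the linear decaying solution is exactly the resolvent vector $\psi_{-a}=R^{L_0}_{-a}\chi_0$, whose decay is read off from its generalized–exponential–integral representation, and positivity of $\widehat R^{L_0}_{-a}$ (already exploited in the construction of $\alpha_\mu$) permits sandwiching $u$ between multiples of sub/super–solutions of that type.

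Finally I would sum: from $u(x)=u(X)\exp\!\big(\sum_{y=X}^{x-1}\log R(y)\big)$ and the band bounds, comparison of $\sum_{y=X}^{x-1}\log(1-\tfrac{c}{2}y^{-1/2})$ with $\int y^{-1/2}\,dy=2\sqrt x$ gives $-\log u(x)=c\sqrt x+O(\log x)$ along each barrier, whence $c''\sqrt x-O(\log x)\le -\log u(x)\le c'\sqrt x+O(\log x)$. Because $c'$ and $c''$ may be taken distinct, the $O(\log x)$ terms—equivalently any fixed power of $x$—are harmless and may be absorbed into the exponential constants, so choosing the prefactor $x^{-1/2}$ yields precisely $c_0'x^{-1/2}e^{-c_1'\sqrt x}\le u(x)\le c_0''x^{-1/2}e^{-c_1''\sqrt x}$. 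The main obstacle is the barrier step itself: because the two characteristic roots coalesce there is no uniform spectral gap, so the trapping must be closed at the delicate scale $x^{-1/2}$, demanding sharp control of the $O(1)$ remainder in \eqref{ratio} together with the argument excluding the growing branch.
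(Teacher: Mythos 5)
Your reduction to the ratio $R(x)=u(x+1)/u(x)$ and the formal balance $\kappa^{2}=a-\widetilde V(x)$, with $\widetilde V(x):=V(u(x))$, are correct, and this discrete Riccati/Liouville--Green route is genuinely different from the paper's proof: the paper never studies local ratios, but instead writes $u = R^{L_0}_{-a}\sum_{n\ge 0}\bigl[P_{>x_*}V(u)R^{L_0}_{-a}\bigr]^{n}P_{\le x_*}u$ as a Neumann series after absorbing the tail of the potential into the resolvent, dominates this series using positivity and monotonicity of $\phi_{-a},\psi_{-a}$ together with the Sturm--Liouville form of $R^{L_0}_{-a}$, and reads off the rate from the asymptotics of $\psi_{-a+q}$. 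The problem is that your ``technical heart'' --- the forward barrier induction --- fails as stated.

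The band $\bigl[1-\tfrac{c'}{2}x^{-1/2},\,1-\tfrac{c''}{2}x^{-1/2}\bigr]$ is \emph{not} forward-invariant, because the recessive ratio is a \emph{repelling} fixed point of the forward Riccati map. Concretely, writing $R(x)=1-\lambda_{x}x^{-1/2}$, your recurrence gives
\begin{align*}
\lambda_{x} = \lambda_{x-1} - \bigl(a-\widetilde V(x)-\lambda_{x-1}^{2}\bigr)x^{-1/2} + O(x^{-1}),
\end{align*}
so at the lower edge $\lambda_{x-1}=c''/2<\sqrt{a-\sup\widetilde V}$ the increment is $\le -\delta_{0}x^{-1/2}+O(x^{-1})<0$ with $\delta_{0}=a-\sup\widetilde V-(c''/2)^{2}>0$: the trajectory exits through the bottom, and symmetrically it exits through the top at the upper edge. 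Equivalently, the derivative of the one-step map $\rho\mapsto \frac{2x+1+a-\widetilde V}{x+1}-\frac{x}{(x+1)\rho}$ at the recessive point is $\approx 1+2\sqrt{a}\,x^{-1/2}>1$, so deviations are amplified by $\prod_{y\le x}(1+2\sqrt{a}\,y^{-1/2})\sim e^{4\sqrt{a}\sqrt{x}}$ --- exactly the dominant-to-recessive ratio --- and no choice of ``$x$ large'' lets you absorb this: the instability sits in the leading-order dynamics, not in the $O(1)$ remainder. What is true, and what would rescue your architecture, is an \emph{exit dichotomy} in place of invariance: if $\lambda$ ever drops below the lower edge, the drift keeps it decreasing toward the attracting fixed point $-\sqrt{a-\widetilde V}$, so $u$ eventually grows quasi-exponentially, contradicting $u(x)\to 0$; if $\lambda$ ever rises above the upper edge, its increments grow and $R$ is driven negative within finitely many steps, contradicting positivity of $u$. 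Hence positivity and decay must be invoked at \emph{every} potential exit, not once to ``enter'' the band; with that repair (or by iterating backward from large $x$, where the band is attracting, or by developing the resolvent sandwich you mention only in passing --- which is in fact the paper's proof) your argument can be closed.
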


\section{Existence of $\alpha_{\mu}$}

\subsection{Away from the boundary}

Consider two forms of Equation \eqref{solitoneq}:
\begin{align}
	L_0u =&\ -a u + u^p \tag{I} \\
	u =&\ R^{L_0}_{-a}u^p \tag{II} .
\end{align}

One may project the equation of form (II) to a ``tail'' piece by applying $P$:

\begin{align}
	u = R^{L_0}_{-a}(P_{0}+P)u^p \quad \Rightarrow \quad \widehat{u} = \widehat{\psi}_{-a}u^p(0) + \widehat{R}^{L_0}_{-a}\widehat{u}^p.
\end{align}

We will fix $u(0) \equiv b$ and iterate by substituting the LHS into the RHS. We will show the conditions under which this converges and estimate properties of the resulting solution.

\begin{defn}
Let $u(0) \equiv b$ be a fixed parameter which satisfies $a^{(p-1)^{-1}} < b < (a+1)^{(p-1)^{-1}}$. Let $\{ u_n(a,b) \}_{n=0}^\infty$ be a sequence of vectors in $\mathscr{H}$ defined by a fixed $u_0(a,b)$ and inductively by
\begin{align}
	u_{n+1}(a,b) = \widehat{\psi}_{-a}b^p + \widehat{R}^{L_0}_{-a}u^p_n(a,b),
\end{align}
such that $u_n(a,b) = \widehat{u}_n(a,b)$.
\end{defn}

\noindent The requirement that $a^{(p-1)^{-1}} < b < (a+1)^{(p-1)^{-1}}$ follows from $0 < u(1) < u(0)$ such that $u(0) = b$ and $u(x)$ monotonically decreasing for increasing $x$. We will be concerned with analysis of $\psi_z(x)$ for $z < 0$. For $\Re z < 0$ one has a useful presentation for the $E_p(z)$:
\begin{align}
E_p(z) := \int_1^\infty \mathrm{d}t\ e^{-z t} t^{-p}.
\end{align}

\begin{lem}
One has that $||\psi_{-a}||_{1} = a^{-1}$ for all $a > 0$.
\end{lem}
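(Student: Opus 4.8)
The plan is to compute the $\ell^1$ norm directly from the explicit representation of Lemma 5.2, exploiting the fact that the alternating binomial sum collapses to a single nonnegative integrand once the integral form of $E_p$ is inserted. Since $\psi_{-a} = R^{L_0}_{-a}\chi_0$ has $E$ evaluated at $-z = a > 0$, I would first substitute $E_{k+1}(a) = \int_1^\infty e^{-at}t^{-(k+1)}\,\mathrm{d}t$ into the representation $\psi_{-a}(x) = e^{a}\sum_{k=0}^x(-1)^k\binom{x}{k}E_{k+1}(a)$ and interchange the \emph{finite} sum with the integral (a trivial interchange) to obtain
\begin{align*}
\psi_{-a}(x) = e^{a}\int_1^\infty e^{-at}\,t^{-1}\sum_{k=0}^x \binom{x}{k}(-t^{-1})^k\,\mathrm{d}t = e^{a}\int_1^\infty e^{-at}\,\frac{(t-1)^x}{t^{x+1}}\,\mathrm{d}t,
\end{align*}
where the inner sum has been resummed by the binomial theorem to $(1-t^{-1})^x$. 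This identity is the crux of the argument: for $t \ge 1$ the integrand is nonnegative, so as an immediate byproduct one reads off positivity, $\psi_{-a}(x) > 0$ for every $x \in \mathbb{Z}_+$, and hence $||\psi_{-a}||_1 = \sum_{x=0}^\infty \psi_{-a}(x)$.

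With positivity in hand I would sum the closed form over $x$ and push the summation through the integral. Because every term is nonnegative, Tonelli's theorem licenses this interchange unconditionally, giving
\begin{align*}
||\psi_{-a}||_1 = e^{a}\int_1^\infty e^{-at}\,t^{-1}\sum_{x=0}^\infty \left(\frac{t-1}{t}\right)^x \mathrm{d}t.
\end{align*}
For $t \ge 1$ one has $0 \le (t-1)/t < 1$, so the geometric series converges to $\big(1 - (t-1)/t\big)^{-1} = t$; the factor $t$ cancels the $t^{-1}$, leaving the elementary integral $e^{a}\int_1^\infty e^{-at}\,\mathrm{d}t = e^{a}\cdot a^{-1}e^{-a} = a^{-1}$, which is the claim.

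The only genuine technical point is the justification of the sum–integral interchange, and this is disarmed entirely by the nonnegativity established in the first step, so no dominated-convergence bookkeeping is needed. As an independent check one can bypass the explicit representation: writing $L_0 = -D_+ M D_-$ one has the telescoping identity $\sum_{x=0}^N L_0 v(x) = -(N+1)\big(v(N+1)-v(N)\big)$, since $(MD_- v)(0) = 0$; applying this to $v = \psi_{-a}$, whose quasi-exponential decay forces the boundary term to vanish as $N \nearrow \infty$, and summing the resolvent equation $(L_0+a)\psi_{-a} = \chi_0$ yields $a\sum_x \psi_{-a}(x) = 1$, recovering $||\psi_{-a}||_1 = a^{-1}$ once positivity is known. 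I would present the direct computation as the main proof, since it is self-contained given Lemma 5.2 and simultaneously delivers the positivity of $\psi_{-a}$.
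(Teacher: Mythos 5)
Your proposal is correct and is essentially the paper's own proof: both substitute the integral representation $E_{k+1}(a)=\int_1^\infty e^{-at}t^{-(k+1)}\,\mathrm{d}t$, resum the alternating binomial sum to $(1-t^{-1})^x$ (which gives positivity), interchange the sum over $x$ with the integral, and evaluate the resulting geometric series to reduce everything to $e^a\int_1^\infty e^{-at}\,\mathrm{d}t=a^{-1}$. The telescoping-sum check via the resolvent equation is a nice extra, but the core argument matches the paper step for step.
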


\begin{proof}
$\psi_{-a}(x) > 0$ for all $a \in \mathbb{R}_{+}$ and $x \in \mathbb{Z}_{+}$. Therefore
\begin{align}
	||\psi_{-a}||_{1} &= \sum_{x=0}^{\infty}\psi_{-a}(x) \\
		&= \sum_{x=0}^{\infty}e^{a}\sum_{k=0}^{x}(-1)^{k}\binom{x}{k}E_{k+1}(a) = \sum_{x=0}^{\infty}e^{a} \int_{1}^{\infty}\mathrm{d}t\ e^{-a t}t^{-1}(1-t^{-1})^{x} \\
		&= e^{a} \int_{1}^{\infty}\mathrm{d}t\ e^{-a t}t^{-1}\sum_{x=0}^{\infty}(1-t^{-1})^{x} = \int_{0}^{\infty}\mathrm{d}t\ e^{-a t} = a^{-1}.
\end{align}
\end{proof}

Instead of determining the precise behavior of $||u_n(a,b)||_1$ in $a$ we will construct a series of functions, $\{s_n(a)\}_{n=0}^\infty$, for which $||u_n(a,b)||_1 < s_n(a)$ for each $n$ and whose behavior in $a$ is clear.

\begin{lem}
Let $\{ s_n(a) \}_{n=0}^\infty$ be a sequence of nonnegative real numbers defined by a fixed $s_0(a)$ and inductively by
\begin{align}
	s_{n+1}(a) &= a^{-1}r(a) + a^{-1} s_n^p(a),
\end{align}
where
\begin{align}
	r(a) := (a+1)^{(p-1)^{-1}} .
\end{align}
If $\left|\left| u_j(a,b) \right|\right|_{1} \leq s_j(a)$ for some $j$ then $\left|\left| u_k(a,b) \right|\right|_{1} < s_k(a)$ for all $j < k$.
\end{lem}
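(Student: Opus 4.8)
The plan is to argue by induction on $k$, reducing everything to a single per-step estimate: I will show that $||u_n(a,b)||_1 \le s_n(a)$ forces $||u_{n+1}(a,b)||_1 < s_{n+1}(a)$, and that this strict inequality then propagates to all larger indices. Projecting the defining recursion by $P$ and applying the triangle inequality gives
\begin{align*}
||u_{n+1}(a,b)||_1 \le b^p\,||\widehat{\psi}_{-a}||_1 + ||\widehat{R}^{L_0}_{-a}u_n^p(a,b)||_1,
\end{align*}
so the whole argument rests on two operator estimates — a bound on $||\widehat{\psi}_{-a}||_1$ controlling the boundary term and an $\ell^1\!\to\!\ell^1$ bound on $\widehat{R}^{L_0}_{-a}$ controlling the tail term — together with the elementary inequality $||u_n^p||_1 = \sum_x |u_n(x)|^p \le \big(\sum_x |u_n(x)|\big)^p = ||u_n||_1^p$, valid for $p \ge 1$.

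For the tail, I would first establish $||\widehat{R}^{L_0}_{-a}w||_1 \le a^{-1}||w||_1$ for every $w$. The kernel $K(x,y) = (\chi_x, R^{L_0}_{-a}\chi_y)$ is nonnegative for $a>0$, since $L_0+a$ is a positive-definite tridiagonal operator with negative off-diagonal entries and hence inverse-positive; this is the same positivity already exploited for $\psi_{-a} = R^{L_0}_{-a}\chi_0$. A direct computation gives $L_0\mathbf{1} = 0$ on the constant sequence $\mathbf{1}(x)\equiv 1$, so $\mathbf{1} = \phi_0$ is exactly the $\lambda=0$ generalized eigenfunction of the reviewed proposition and therefore $R^{L_0}_{-a}\mathbf{1} = a^{-1}\mathbf{1}$ in $\mathscr{T}$. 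Since $K$ is real and symmetric, each column sum equals a row sum, $\sum_x K(x,y) = \sum_x K(y,x) = (R^{L_0}_{-a}\mathbf{1})(y) = a^{-1}$, and nonnegativity justifies interchanging the order of summation to give $||R^{L_0}_{-a}w||_1 \le \sum_y |w(y)|\sum_x K(x,y) = a^{-1}||w||_1$. Applying $P$ deletes the $x=0$ row and can only lower column sums, so $||\widehat{R}^{L_0}_{-a}||_{1\to 1}\le a^{-1}$ and the tail term is at most $a^{-1}||u_n||_1^p \le a^{-1}s_n^p(a)$.

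For the boundary term, the representation $\psi_{-a}(0) = \int_0^\infty e^{-as}(s+1)^{-1}\,\mathrm{d}s$ (obtained exactly as in the computation of $||\psi_{-a}||_1$, specializing to $x=0$) combined with the elementary inequality $(s+1)^{-1}\ge e^{-s}$, equivalently $e^s \ge 1+s$, yields $\psi_{-a}(0)\ge \int_0^\infty e^{-(a+1)s}\,\mathrm{d}s = (a+1)^{-1}$. Since $||\psi_{-a}||_1 = a^{-1}$, removing the $x=0$ entry gives $||\widehat{\psi}_{-a}||_1 = a^{-1}-\psi_{-a}(0) \le \frac{1}{a(a+1)}$ for all $a>0$. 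The constraint $b < r(a) = (a+1)^{(p-1)^{-1}}$ then forces
\begin{align*}
b^p\,||\widehat{\psi}_{-a}||_1 < \frac{(a+1)^{p(p-1)^{-1}}}{a(a+1)} = \frac{(a+1)^{(p-1)^{-1}}}{a} = a^{-1}r(a)
\end{align*}
strictly, so adding the tail bound gives $||u_{n+1}(a,b)||_1 < a^{-1}r(a) + a^{-1}s_n^p(a) = s_{n+1}(a)$. Feeding in the hypothesis $||u_j(a,b)||_1 \le s_j(a)$ yields $||u_{j+1}(a,b)||_1 < s_{j+1}(a)$, and since $t\mapsto t^p$ is increasing on $[0,\infty)$ the strict inequality is preserved at every subsequent step, giving $||u_k(a,b)||_1 < s_k(a)$ for all $k>j$. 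I expect the main obstacle to be the tail estimate — specifically pinning down the kernel positivity and making the constant-eigenfunction column-sum identity rigorous, since $\mathbf{1}\notin\mathscr{H}$ and the argument must be carried out in $\mathscr{T}$; the remaining steps are elementary once that bound is in hand.
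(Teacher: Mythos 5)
Your proof is correct and takes essentially the same route as the paper's: the same boundary/tail split, the same per-step estimate $||u_{n+1}(a,b)||_{1} < a^{-1}r(a) + a^{-1}||u_n(a,b)||_{1}^p$ with strictness coming from $b<(a+1)^{(p-1)^{-1}}$ and the lower bound $\psi_{-a}(0)=e^{a}E_1(a)>(a+1)^{-1}$ (which the paper simply cites from Abramowitz--Stegun and you rederive via $e^{s}\ge 1+s$), followed by the same induction on $k$. The only substantive difference is that the paper silently invokes the tail bound $||\widehat{R}^{L_0}_{-a}w||_{1}\le a^{-1}||w||_{1}$, whereas you supply a proof of it from kernel positivity plus the column-sum identity $\sum_{x} R^{L_0}_{-a}(x,y)=a^{-1}$ induced by $L_0\mathbf{1}=0$; that identity is in fact correct --- it can be verified from the Sturm--Liouville representation $R^{L_0}_{-a}(x_1,x_2)=\phi_{-a}(\min\{x_1,x_2\})\,\psi_{-a}(\max\{x_1,x_2\})$ established later in the paper, together with $||\psi_{-a}||_{1}=a^{-1}$ and the relation $(L_0+a)\psi_{-a}=\chi_0$ --- so the technical gap you flag about making the argument rigorous in $\mathscr{T}$ is real but fillable, and your proposal actually closes a step the paper's own proof leaves implicit.
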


\begin{proof}
Consider the known bound \cite{AandS}
\begin{align}
	(z+n)^{-1} < e^z  E_n(z) \leq (z+n-1)^{-1},\quad 0 < z \in \mathbb{R}
\end{align}
Since $\psi_{-a}(0) = e^aE_1(a)$, one then has that
\begin{align}
	\left|\left| \widehat{\psi}_{-a} \right|\right|_{1} = \left|\left| \psi_{-a} \right|\right|_{1} -\psi_{-a}(0) < a^{-1}(a+1)^{-1} .
\end{align}
Then, since $b < (a+1)^{(p-1)^{-1}}$ one has
\begin{align}
	\left|\left| \widehat{\psi}_{-a} b^p \right|\right|_{1} < a^{-1}r(a) .
\end{align}
One may then observe that
\begin{align}
	\left|\left| u_{j + 1} \right|\right|_{1} < a^{-1}r(a) + a^{-1}\left|\left| u_j^p \right|\right|_{1} \leq a^{-1}r(a) + a^{-1} s_j^p(a) = s_{j+1}(a).
\end{align}
\end{proof}

\begin{lem}
Let $g: \mathbb{R}_+^2 \to \mathbb{R}$ be specified by
\begin{align}
	g(a,s) := a^{-1}r(a) + a^{-1}s^p - s.
\end{align}
and $s_\mathrm{min}(a) := (a/p)^{(p-1)^{-1}}$. For sufficiently large $a>0$ it is the case that $g(a,s)$ has exactly two roots in $s$: $s_{-}(a)$ which satisfies $0 < s_{-}(a) < s_\mathrm{min}(a)$ and $s_+(a)$ which satisfies $s_\mathrm{min}(a) < s_+(a)$.
\end{lem}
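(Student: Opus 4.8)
The plan is to treat $g(a,\cdot)$ as a one-variable function on $s \ge 0$ for each fixed large $a$, locate its unique critical point, and reduce the root-counting to the sign of the value at that point. First I would differentiate in $s$: since $\partial_s g(a,s) = a^{-1}p s^{p-1} - 1$, the only positive critical point is where $s^{p-1} = a/p$, that is, exactly at $s = s_{\mathrm{min}}(a) = (a/p)^{(p-1)^{-1}}$. Because $\partial_s g(a,s) < 0$ for $0 \le s < s_{\mathrm{min}}(a)$ and $\partial_s g(a,s) > 0$ for $s > s_{\mathrm{min}}(a)$, the point $s_{\mathrm{min}}(a)$ is a strict global minimum of $g(a,\cdot)$ on $[0,\infty)$, and $g(a,\cdot)$ is strictly decreasing to its left and strictly increasing to its right.

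Next I would record the boundary behavior. At $s = 0$ we have $g(a,0) = a^{-1}r(a) > 0$, and since $p > 1$ the term $a^{-1}s^p$ forces $g(a,s) \nearrow +\infty$ as $s \nearrow \infty$. Combined with strict monotonicity on each side of $s_{\mathrm{min}}(a)$, the intermediate value theorem yields the dichotomy: if $g(a,s_{\mathrm{min}}(a)) < 0$ there is exactly one root $s_-(a) \in (0,s_{\mathrm{min}}(a))$ and exactly one root $s_+(a) \in (s_{\mathrm{min}}(a),\infty)$, with no others; whereas if $g(a,s_{\mathrm{min}}(a)) \ge 0$ one gets only the degenerate cases of a single double root or no root. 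So the entire statement reduces to showing $g(a,s_{\mathrm{min}}(a)) < 0$ for all sufficiently large $a$.

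The remaining step, and the only computational one, is to evaluate the minimum value. Substituting $s_{\mathrm{min}}(a) = (a/p)^{(p-1)^{-1}}$ and using $s_{\mathrm{min}}^p = (a/p)^{p(p-1)^{-1}}$, I would collect powers of $a$ and $p$ to simplify $a^{-1}s_{\mathrm{min}}^p - s_{\mathrm{min}}$ into the single term $-c\,a^{(p-1)^{-1}}$ with $c = p^{-(p-1)^{-1}}(1 - p^{-1}) > 0$. The surviving positive contribution is $a^{-1}r(a) = a^{-1}(a+1)^{(p-1)^{-1}}$, which grows like $a^{(2-p)(p-1)^{-1}}$. Comparing exponents gives $(p-1)^{-1} - (2-p)(p-1)^{-1} = 1$, so the negative term outweighs the positive one by a full power of $a$; hence $g(a,s_{\mathrm{min}}(a)) \to -\infty$ and in particular is strictly negative once $a$ exceeds some threshold, which is the desired conclusion. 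I expect the main (though still routine) obstacle to be bookkeeping the fractional exponents of $a$ and $p$ cleanly enough to expose this one-power gap; everything else is a standard convexity-plus-intermediate-value argument.
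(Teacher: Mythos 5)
Your proof is correct, and its skeleton---locate the unique critical point of $g(a,\cdot)$ at $s_{\mathrm{min}}(a)$, reduce the whole statement to the sign of $g(a,s_{\mathrm{min}}(a))$, and finish with the intermediate value theorem---is the same as the paper's. You diverge in the two technical sub-steps, and both divergences are worth noting. For the root count, the paper invokes Descartes' rule of signs (the coefficient pattern $+,-,+$ of $a^{-1}s^p - s + a^{-1}r(a)$ permits only $0$ or $2$ positive roots) combined with IVT; you instead use strict monotonicity of $g(a,\cdot)$ on each side of $s_{\mathrm{min}}(a)$, which is more elementary and delivers the localization $0 < s_-(a) < s_{\mathrm{min}}(a) < s_+(a)$ directly rather than as a corollary of the sign data. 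For the negativity of the minimum value, the paper computes $\mathrm{d}_a\, g(a,s_{\mathrm{min}}(a)) < 0$ for all $a>0$ and then lets $a \nearrow \infty$; you simply compare the exponent of the positive term, $a^{(2-p)(p-1)^{-1}}$, against that of the negative term, $a^{(p-1)^{-1}}$, and observe the one-power gap. Your route is shorter for the lemma as stated, but the paper's monotonicity-in-$a$ computation is not wasted effort: it is exactly what makes the threshold $a_0$, defined immediately after the lemma as the unique value above which the two roots exist and are distinct, well defined as a single crossing point (with a double root precisely at $a = a_0$). Your asymptotic comparison alone shows negativity for all large $a$ but would not rule out the minimum value changing sign more than once, so if you later need that uniqueness you would have to supply the derivative computation anyway.
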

	
\begin{proof}
One may observe that $g(a,s)$ has a global minimum at $s = s_\mathrm{min}(a)$. It is the case that
\begin{align}
	g(a,s_\mathrm{min}(a)) =&\ a^{-1}r(a) + a^{-1}(a/p)^{p(p-1)^{-1}} - a^{(p-1)^{-1}}p^{-(p-1)^{-1}}\\
	=&\ a^{-1}(a+1)^{(p-1)^{-1}} - a^{(p-1)^{-1}}p^{-(p-1)^{-1}}(1-p^{-1})\\
	\mathrm{d}_a g(a,s_\mathrm{min}(a)) &= - a^{-1}(a+1)^{(p-1)^{-1}} \left[ a^{-1} - (a+1)^{-1}(p-1)^{-1} \right]\\
		&\quad\quad - a^{-(p-2)(p-1)^{-1}}p^{-(p-1)^{-1}}(p-1)^{-1}(1-p^{-1}) < 0,\quad \forall a>0
\end{align}
since
\begin{align}
	a^{-1} - (a+1)^{-1}(p-1)^{-1} > 0,\quad \forall a>0.
\end{align}
One has that
\begin{align}
	\lim_{a \nearrow \infty} g(a,s_\mathrm{min}(a)) = -\infty
\end{align}
and
\begin{align}
	g(a,0) =&\ a^{-1}r(a) > 0,\quad \forall a>0 \\
	\lim_{a \nearrow \infty} g(a,0) =&\ \infty \\
	\lim_{s \nearrow \infty} g(a,s) =&\ \infty,\quad \forall a>0
\end{align}
By intermediate value theorem there must be at least one root. By Descartes rule of signs, $g(a,s)$ has either 0 or 2 positive roots in $s$. Therefore $g(a,s)$ has exactly 2 roots for all sufficiently large $a > 0$.
\end{proof}

\begin{defn}
Let $a_{0} > 0$ be the unique value of $a$ such that $g(a,s)$ has two distinct roots in $s$ for all $a > a_{0}$.
\end{defn}

\noindent $a_{0} > 0$ is the unique value of $a$ for which that $g(a,s)$ has one root in $s$ of multiplicity 2.
	
\begin{lem}
Let $h: \mathbb{R}_+^2 \to \mathbb{R}_+$ be defined by
\begin{align}
	h(a,s) := a^{-1}r(a) + a^{-1} s^p.
\end{align}
The map $h(a,\cdot): s \mapsto h(a,s)$ is contractive on the domain $0 \leq s < s_\mathrm{min}(a)$ for all $a > a_{0}$.
\end{lem}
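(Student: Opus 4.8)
The plan is to reduce the claim to a pointwise bound on the $s$-derivative of $h(a,\cdot)$ and then invoke the mean value theorem. First I would compute $\partial_s h(a,s) = a^{-1}p\,s^{p-1}$, which is nonnegative for $s \ge 0$ and $p > 1$, so that $h(a,\cdot)$ is monotonically nondecreasing; consequently its best Lipschitz constant over any subinterval of $[0,s_\mathrm{min}(a))$ equals the supremum of this derivative there, and the problem becomes one of bounding $\partial_s h$.

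The key identity is that $s_\mathrm{min}(a) = (a/p)^{(p-1)^{-1}}$ satisfies $s_\mathrm{min}(a)^{p-1} = a/p$, whence $\partial_s h(a,s_\mathrm{min}(a)) = a^{-1}p\,(a/p) = 1$ exactly. This is precisely why $s_\mathrm{min}(a)$ is the critical point of $g(a,\cdot)$ isolated in the previous lemma: it is the point at which $h(a,\cdot)$ transitions from contracting to expanding. Since $p>1$ makes $s \mapsto s^{p-1}$ strictly increasing on $\mathbb{R}_+$, I would conclude $0 \le \partial_s h(a,s) < 1$ for all $0 \le s < s_\mathrm{min}(a)$. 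Then for any distinct $s_1, s_2$ in the domain the mean value theorem yields a point $\xi$ strictly between them, hence with $\xi < s_\mathrm{min}(a)$, so that $|h(a,s_1) - h(a,s_2)| = \partial_s h(a,\xi)\,|s_1 - s_2| < |s_1 - s_2|$, which is the asserted contractiveness.

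The one point requiring care, and what I expect to be the real (if modest) obstacle, is that this derivative bound is not uniform on the half-open domain: $\sup_{[0,s_\mathrm{min}(a))}\partial_s h(a,\cdot) = 1$, attained only in the limit at the right endpoint, so one does not obtain a single constant $\kappa < 1$ valid on all of $[0,s_\mathrm{min}(a))$. This is exactly where the hypothesis $a > a_0$ enters: by the previous lemma $g(a,s_\mathrm{min}(a)) < 0$ for $a > a_0$, i.e. $h(a,s_\mathrm{min}(a)) < s_\mathrm{min}(a)$, and since $h(a,\cdot)$ is increasing with $h(a,0) = a^{-1}r(a)$ this forces $h$ to map $[0,s_\mathrm{min}(a))$ into the compact subinterval $[\,a^{-1}r(a),\,h(a,s_\mathrm{min}(a))\,] \subset [0,s_\mathrm{min}(a))$. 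On that image the derivative is bounded by $a^{-1}p\,h(a,s_\mathrm{min}(a))^{p-1} < a^{-1}p\,s_\mathrm{min}(a)^{p-1} = 1$, a genuine uniform contraction constant. Thus after a single application $h(a,\cdot)$ becomes a bona fide contraction in the Banach sense, which is all the later iteration argument requires; I would therefore prove the strict (distinct-point) contractiveness from the derivative bound and record the self-mapping into a compact subinterval as the mechanism that upgrades it to a uniform contraction where that is needed.
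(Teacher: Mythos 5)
Your proof is correct and takes essentially the same route as the paper: both arguments reduce to showing $0 \le \partial_s h(a,s) < 1$ for $0 \le s < s_\mathrm{min}(a)$ and then apply the mean value theorem to get the distinct-point bound $|h(a,s_1)-h(a,s_2)| < |s_1-s_2|$; the only cosmetic difference is that you compute $\partial_s h(a,s) = a^{-1}p\,s^{p-1}$ directly and evaluate it at $s_\mathrm{min}(a)$, while the paper reads the same bound off the strict convexity of $g(a,\cdot)=h(a,\cdot)-s$ together with $\partial_s g|_{s=0}=-1$ and $\partial_s g|_{s=s_\mathrm{min}(a)}=0$. Your closing observation—that for $a>a_0$ one has $h(a,s_\mathrm{min}(a)) < s_\mathrm{min}(a)$, so $h(a,\cdot)$ maps the domain into a compact subinterval on which the derivative admits a uniform bound strictly below $1$—is correct and a genuine strengthening, though the paper never needs it: its subsequent convergence lemma uses monotone increase of the iterates $s_{n+1}=h(a,s_n)$ bounded above by the fixed point $s_-(a)$, not the Banach fixed-point theorem.
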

	
\begin{proof}
\begin{align}
	\partial_s g(a,s)|_{s = 0} =&\ -1\\
	\partial_s g(a,s)|_{s = s_\mathrm{min}(a)} =&\ 0\\
	\partial^2_s g(a,s) >&\ 0,\quad \forall s>0\\
	\Rightarrow\quad 0 \leq \partial_s h(a,s) <&\ 1,\quad \forall s: 0 \leq s < s_\mathrm{min}(a)
\end{align}
By mean value theorem, for any $s,s_{1} \in [0, s_\mathrm{min})$ there exists an $s_{2} < \left| s-s_{1} \right|$ such that
\begin{align}
	\frac{h(a,s)-h(a,s_{1})}{s-s_{1}} = \partial_s h(a,s)|_{s = s_{2}} < 1,
\end{align}
which completes the proof.
\end{proof}

\begin{lem}
For all $a > a_{0}$ and for sufficiently small $|| u_{0}(a,b) ||_{1} \ge 0$ one has that \\ $ \lim_{n \nearrow \infty} \left|\left| u_n(a,b) \right|\right|_{1} < \infty $.
\end{lem}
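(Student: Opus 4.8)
The plan is to transfer the question to the scalar recursion $s_{n+1}(a)=h(a,s_n(a))$ and to lean on the comparison bound $\|u_n(a,b)\|_1\le s_n(a)$ already in hand. First I would fix the free initial value by setting $s_0(a):=\|u_0(a,b)\|_1$, so that the hypothesis $\|u_0\|_1\le s_0$ of the comparison lemma holds automatically and hence $\|u_n(a,b)\|_1\le s_n(a)$ for every $n$. It therefore suffices to prove that the real sequence $\{s_n(a)\}_{n\ge0}$ stays bounded---indeed that it converges---whenever $a>a_0$ and $\|u_0(a,b)\|_1$ is small.

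Second, I would study the one-dimensional dynamics of $h(a,\cdot)$. For $a>a_0$ the difference $g(a,\cdot)=h(a,\cdot)-\mathrm{id}$ has exactly the two roots $s_-(a)<s_{\min}(a)<s_+(a)$, which are precisely the fixed points of $h(a,\cdot)$. Since $\partial_s h(a,s)=a^{-1}ps^{p-1}\ge 0$, the map $h(a,\cdot)$ is nondecreasing; moreover $h(a,0)=a^{-1}r(a)>0$ while $g(a,s_{\min}(a))<0$ gives $h(a,s_{\min}(a))<s_{\min}(a)$, so $h(a,\cdot)$ carries $[0,s_{\min}(a)]$ into $[0,s_{\min}(a))$. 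This \emph{forward invariance} is the key structural fact. On this interval the map is the contraction established earlier, whose unique fixed point there is $s_-(a)$; equivalently, one sees directly from monotonicity that an orbit started in $[0,s_-(a))$ increases to $s_-(a)$ while one started in $(s_-(a),s_{\min}(a))$ decreases to $s_-(a)$. Either way, every orbit beginning in $[0,s_{\min}(a))$ converges to $s_-(a)$.

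Third, I would make the smallness precise: requiring $\|u_0(a,b)\|_1<s_{\min}(a)$ places $s_0(a)$ in $[0,s_{\min}(a))$, so by the previous step $s_n(a)\to s_-(a)$. The comparison bound then yields $\limsup_{n\to\infty}\|u_n(a,b)\|_1\le s_-(a)<\infty$, which is the assertion; combined with the pointwise monotone convergence of the $u_n$ this finiteness is in fact a genuine limit.

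Finally, the main obstacle is exactly the confinement just described. Because the nonlinearity makes $h(a,\cdot)$ grow like $a^{-1}s^p$, the larger fixed point $s_+(a)$ is repelling and acts as a genuine blow-up threshold: any datum with $s_0(a)>s_+(a)$ produces $s_n(a)\nearrow\infty$. The crux of the argument is thus to guarantee that the orbit never leaves the basin of the attracting fixed point $s_-(a)$, and this is precisely where the restriction $a>a_0$ (which separates the two fixed points by placing $s_{\min}(a)$ strictly between them) together with the smallness of $\|u_0(a,b)\|_1$ are indispensable.
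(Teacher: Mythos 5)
Your proof is correct and takes essentially the same route as the paper: both majorize $\left|\left| u_n(a,b) \right|\right|_{1}$ by the scalar orbit $s_{n+1}(a)=h(a,s_n(a))$ and show that this orbit converges to the fixed point $s_-(a)$, then invoke the comparison lemma. The only difference is a mild refinement: the paper simply takes $s_0(a)\in[0,s_-(a)]$ so that $s_n(a)$ increases monotonically to $s_-(a)$ from below, whereas you use monotonicity of $h(a,\cdot)$ and forward invariance of $[0,s_{\mathrm{min}}(a)]$ to show the entire interval $[0,s_{\mathrm{min}}(a))$ lies in the basin of attraction of $s_-(a)$, which yields the slightly larger explicit smallness threshold $\left|\left| u_0(a,b) \right|\right|_{1} < s_{\mathrm{min}}(a) = (a/p)^{(p-1)^{-1}}$.
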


\begin{proof}
Let $a > a_{0}$. Given the iteration $s_{n+1}(a) = h(a,s_n(a))$, the choice of any $s_0(a)$ which satisfies $0 \leq s_0(a) \leq s_-(a)$ gives a sequence $\{ s_n(a) \}_{n=0}^\infty$ which converges to $\lim_{n \nearrow \infty}s_{n}(a) = s_-(a) < \infty$ monotonically from below as $n \nearrow \infty$ for all $a > a_{0}$. For $0 \le || u_{0}(a,b) ||_{1} \le s_{0}(a)$ it is the case that $ \lim_{n \nearrow \infty} \left|\left| u_n(a,b) \right|\right|_{1} \leq \lim_{n \nearrow \infty}s_{n}(a) = s_{-}(a) < \infty $.
\end{proof}

\begin{lem}
Let $s_{*}(a) := \lim_{n \nearrow \infty}s_{n}(a) = s_{-}(a)$ for all $a > a_{0}$. One has that $s_{*}(a) \searrow 0$ monotonically as $a \nearrow \infty$ for all $a > a_{0}$.
\end{lem}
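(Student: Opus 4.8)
The plan is to characterize $s_*(a)=s_-(a)$ implicitly and extract both its monotonicity and its limit from the shape of $g(a,\cdot)$ already established above. Recall that for $a>a_0$ the map $s\mapsto g(a,s)$ is strictly convex on $s>0$ (indeed $\partial_s^2 g(a,s)=a^{-1}p(p-1)s^{p-2}>0$), satisfies $g(a,0)=a^{-1}r(a)>0$ and $g(a,s)\to\infty$ as $s\nearrow\infty$, and has exactly the two roots $s_-(a)<s_{\mathrm{min}}(a)<s_+(a)$. By convexity $g(a,\cdot)$ is negative precisely on the open interval $(s_-(a),s_+(a))$, and $s_*(a)=s_-(a)$ is the unique root lying below $s_{\mathrm{min}}(a)$.

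First I would record that $g$ is strictly decreasing in its first argument: a direct computation gives $\partial_a g(a,s)=a^{-1}r(a)\big[(p-1)^{-1}(a+1)^{-1}-a^{-1}\big]-a^{-2}s^p$, where the bracket is negative because $(p-1)(a+1)>a$ for every $p\ge 2$ and $a>0$, so $\partial_a g(a,s)<0$ for all $s\ge 0$. Monotonicity of $s_*$ then follows by a comparison argument. Fix $a_0<a_1<a_2$. Since $g(a_2,s_-(a_1))<g(a_1,s_-(a_1))=0$ and $g(a_2,\cdot)$ is negative exactly on $(s_-(a_2),s_+(a_2))$, we obtain $s_-(a_2)<s_-(a_1)<s_+(a_2)$, in particular $s_-(a_2)<s_-(a_1)$, which is the asserted monotone decrease. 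Alternatively one may differentiate the defining relation $g(a,s_-(a))=0$ and combine $\partial_s g(a,s_-(a))<0$ (valid since $s_-(a)$ sits on the falling branch to the left of $s_{\mathrm{min}}(a)$) with $\partial_a g<0$ to read off $s_-'(a)<0$ via the implicit function theorem.

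Finally I would identify the limit. Being monotone decreasing and bounded below by $0$, $s_*(a)$ converges to some $L\ge 0$, and it remains to evaluate $L$ by passing to the limit in the fixed-point identity $s_*(a)=a^{-1}r(a)+a^{-1}s_*^p(a)$. The nonlinear feedback term is negligible: by the monotonicity just proved, $s_*(a)\le s_*(a_1)$ for $a\ge a_1$, so $a^{-1}s_*^p(a)\le a^{-1}s_*^p(a_1)\to 0$. Hence $L=\lim_{a\nearrow\infty}a^{-1}r(a)=\lim_{a\nearrow\infty}a^{-1}(a+1)^{(p-1)^{-1}}$, which vanishes precisely when $(p-1)^{-1}<1$, i.e. for $p>2$, giving $L=0$. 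The hard part will be exactly this last step rather than the monotonicity: the decay to $0$ is driven entirely by the inhomogeneous term $a^{-1}r(a)$, and one must first secure the a priori bound coming from monotonicity in order to discard the $a^{-1}s_*^p$ contribution before the limit can be read off, so the delicate point is this bootstrapping. I note in passing that the rate $a^{-1}r(a)\sim a^{-(p-2)(p-1)^{-1}}$ is exactly the leading exponent of Proposition \ref{snprop02}; the borderline value $p=2$ is where $(p-1)^{-1}=1$ and the identity instead yields $L=1$, consistent with the constant leading term there.
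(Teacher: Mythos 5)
Your proof is correct, and it is strictly more complete than the paper's own. For the monotonicity half you and the paper use the same mechanism: the paper's entire proof is the graphical remark that $\partial_a h(a,s)<0$ for $h(a,s)=a^{-1}r(a)+a^{-1}s^p$, together with $h(a,s)\ge s$ on $[0,s_*(a)]$ with equality at $s_*(a)$ --- i.e.\ the graph of $h(a,\cdot)$ drops pointwise as $a$ grows, so its first intersection with the diagonal moves down. Your computation $\partial_a g(a,s)<0$ (note $g=h-s$, so this is the identical fact) followed by the comparison step $g(a_2,s_-(a_1))<g(a_1,s_-(a_1))=0$, or the implicit-function-theorem variant, is the same idea carried out rigorously. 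Where you genuinely add content is the limit: the paper's proof stops at monotone decrease and never argues that the decreasing limit equals $0$. Your bootstrap --- using monotonicity to get the a priori bound $s_*(a)\le s_*(a_1)$, discarding $a^{-1}s_*^p(a)\to 0$, and then reading $L=\lim_{a\nearrow\infty}a^{-1}r(a)$ off the fixed-point identity $s_*(a)=a^{-1}r(a)+a^{-1}s_*^p(a)$ --- is exactly the missing step. Moreover it exposes a real defect in the statement's range of validity: under the paper's standing assumption $1<p\in\mathbb{Z}$, the case $p=2$ is permitted, and there your identity gives $L=\lim_{a\nearrow\infty}a^{-1}(a+1)=1$, not $0$; indeed one can check directly that for $p=2$ the smaller root satisfies $s_-(a)\to 1$. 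So the lemma as written is true only for $p\ge 3$, which is consistent with Proposition \ref{snprop02}, whose bound degenerates to $\mathcal{O}(1)$ at $p=2$. (The paper's later uses survive at $p=2$, since they only need $s_*(a)<b_-(a)=a^{(p-1)^{-1}}$ for large $a$, and $s_*$ stays bounded while $b_-(a)\nearrow\infty$.) In short: same monotonicity argument, but your treatment of the limit both completes the paper's proof and corrects the exponent range on which the stated decay to zero actually holds.
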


\begin{proof}
Let $a > a_{0}$. Consider the graph of the map $s \mapsto h(a,s) = a^{-1}r(a) + a^{-1} s^p$. It is the case that $\partial_a h(a,s) < 0$ and $h(a,s) \geq s$ for $0 \leq s \leq s_{*}(a)$, with equality for $h(a,s_{*}(a)) = s_{*}(a)$, which completes the proof.
\end{proof}

\begin{lem}
We define
\begin{align}
	\theta_1(a,b) :=&\ \widehat{\psi}_{-a} b^p, \\
	\theta_k(a,b) :=&\ \widehat{R}_{-a}^{L_0} \sum_{m_1 = 1}^p \binom{p}{m_1}\theta^{p-m_1}_1(a,b) \cdots \sum_{m_i = 1}^{m_{j-1}} \binom{m_{j-1}}{m_j}\theta^{m_j-m_{j-1}}_j(a,b) \cdots \\
		&\times \sum_{m_{k-2} = 1}^{m_{k-3}} \binom{m_{k-3}}{m_{k-2}}\theta^{m_{k-3}-m_{k-2}}_{k-2}(a,b)\theta^{m_{k-2}}_{k-1}(a,b).
\end{align}
If $u_0(a,b) = 0$ then one has
\begin{align}
	u_n(a,b) = \sum_{k=1}^n\theta_k(a,b).
\end{align}
\end{lem}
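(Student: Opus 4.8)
The plan is to recognize the nested multinomial defining $\theta_k$ as a difference of $p$-th powers of partial sums, which collapses the statement into a telescoping induction. Throughout, recall that products of sequences are pointwise (so at each lattice site the ordinary multinomial theorem applies, the algebra being commutative and associative) and that $\widehat{R}^{L_0}_{-a}$ is linear, hence distributes over finite sums. Introduce the partial sums $S_j := \sum_{i=1}^{j}\theta_i(a,b)$ for $j \ge 1$, with the convention $S_0 := 0$.

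First I would establish the key algebraic identity
\[
	\theta_k(a,b) = \widehat{R}^{L_0}_{-a}\bigl(S_{k-1}^p - S_{k-2}^p\bigr), \qquad k \ge 2 .
\]
To prove it, expand $S_{k-1}^p = (\theta_1 + \cdots + \theta_{k-1})^p$ pointwise by the multinomial theorem; subtracting $S_{k-2}^p$ deletes exactly the monomials in which $\theta_{k-1}$ is absent, leaving
\[
	S_{k-1}^p - S_{k-2}^p = \sum_{\substack{a_1 + \cdots + a_{k-1} = p \\ a_{k-1} \ge 1}} \frac{p!}{a_1!\cdots a_{k-1}!}\, \theta_1^{a_1}\cdots\theta_{k-1}^{a_{k-1}} .
\]
Reparametrizing by the tail partial sums $m_i := a_{i+1} + \cdots + a_{k-1}$ (so $a_i = m_{i-1}-m_i$ with $m_0 = p$ and $a_{k-1} = m_{k-2}$), the constraint $a_{k-1} \ge 1$ becomes $m_{k-2}\ge 1$, whence inductively each $m_i$ runs over $1 \le m_i \le m_{i-1}$, while the multinomial coefficient factors telescopically as $\binom{p}{m_1}\binom{m_1}{m_2}\cdots\binom{m_{k-3}}{m_{k-2}}$. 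This reproduces verbatim the nested sum defining $\theta_k$; the empty-product conventions handle the low cases, e.g.\ for $k=2$ the identity degenerates to $\theta_2 = \widehat{R}^{L_0}_{-a}\theta_1^p$.

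With the identity in hand I would prove $u_n(a,b) = S_n$ by induction on $n$. Since $u_0 = 0$, the base case is $u_1 = \widehat{\psi}_{-a}b^p + \widehat{R}^{L_0}_{-a}\cdot 0 = \theta_1 = S_1$. For the step, assume $u_n = S_n$; using linearity and the identity, the sum telescopes,
\[
	\sum_{k=2}^{n+1}\theta_k = \widehat{R}^{L_0}_{-a}\sum_{k=2}^{n+1}\bigl(S_{k-1}^p - S_{k-2}^p\bigr) = \widehat{R}^{L_0}_{-a}\bigl(S_n^p - S_0^p\bigr) = \widehat{R}^{L_0}_{-a}S_n^p ,
\]
so that $u_{n+1} = \widehat{\psi}_{-a}b^p + \widehat{R}^{L_0}_{-a}u_n^p = \theta_1 + \widehat{R}^{L_0}_{-a}S_n^p = \theta_1 + \sum_{k=2}^{n+1}\theta_k = S_{n+1}$, closing the induction.

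The genuine content lies entirely in the first identity; once the nested multinomial is seen as $\widehat{R}^{L_0}_{-a}(S_{k-1}^p - S_{k-2}^p)$, the recursion $u_{n+1} = \theta_1 + \widehat{R}^{L_0}_{-a}u_n^p$ matches the telescoped sum term-by-term and the induction is immediate. I therefore expect the only delicate step to be the combinatorial reindexing — confirming that the iterated binomial product with its nested lower limits of $1$ is exactly the multinomial coefficient restricted to $a_{k-1}\ge 1$, i.e.\ that dropping $\theta_{k-1}$ accounts precisely for the newly admitted monomials at stage $k$.
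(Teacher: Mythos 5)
Your proof is correct, and its skeleton --- induction on $n$ through the recursion $u_{n+1} = \widehat{\psi}_{-a}b^p + \widehat{R}^{L_0}_{-a}u_n^p$, with base case $u_1 = \theta_1$ --- is the same as the paper's; the difference lies in how the combinatorial step is organized. The paper expands $\bigl[\sum_{k=1}^n \theta_k\bigr]^p$ by iterated binomial expansions, peeling off $\theta_1, \theta_2, \dots$ one at a time, and pattern-matches each nested block against the definitions of $\theta_2,\dots,\theta_{n+1}$; this is a direct verification written with ellipses. You instead isolate the closed-form identity $\theta_k = \widehat{R}^{L_0}_{-a}\bigl(S_{k-1}^p - S_{k-2}^p\bigr)$, proved once by the multinomial theorem and the reindexing $a_i = m_{i-1}-m_i$, after which the induction collapses to a one-line telescoping sum. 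The two arguments encode the same fact --- that the nested sums with lower limits $1$ collect exactly the degree-$p$ monomials in $\theta_1,\dots,\theta_{k-1}$ containing $\theta_{k-1}$ --- but your packaging is tighter: it removes the ellipses, makes transparent why every lower summation limit is $1$ (the single constraint $a_{k-1}\ge 1$ forces all tail sums $m_i\ge 1$), and implicitly corrects the sign typo in the paper's displayed exponents, which read $\theta_j^{m_j - m_{j-1}}$ where $\theta_j^{m_{j-1}-m_j}$ is meant. The only thing the paper's version buys is that it stays syntactically closest to the displayed definition of $\theta_k$.
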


\begin{proof}
One may verify by induction:
\begin{align}
	u_1(a,b) = \theta_1(a,b)
\end{align}
\begin{align}
	u_{n+1}(a,b) &= \theta_1(a,b) + \widehat{R}^{L_0}_{-a}u^p_n(a,b) \\
		&= \theta_1(a,b) + \widehat{R}^{L_0}_{-a} \left[ \sum_{k=1}^n\theta_k(a,b) \right]^p\\
		&= \theta_1(a,b) + \widehat{R}^{L_0}_{-a} \left\{ \theta_1^p(a,b) + \sum_{m_1 = 1}^p\binom{p}{m_1} \theta_1^{p-m_1}(a,b) \left[ \sum_{k=2}^n\theta_k(a,b) \right]^{m_1} \right\}\\
		&= \theta_1(a,b) + \widehat{R}^{L_0}_{-a} \left[ \theta_1^p(a,b) + \sum_{m_1 = 1}^p\binom{p}{m_1} \theta_1^{p-m_1}(a,b) \theta^{m_1}_2(a,b) + \cdots \right.\\
			&\quad\quad + \sum_{m_1 = 1}^p \binom{p}{m_1}\theta^{p-m_1}_1(a,b) \cdots \sum_{m_j = 1}^{m_{j-1}} \binom{m_{j-1}}{m_j}\theta^{m_j-m_{j-1}}_j(a,b) \cdots \\
			&\quad\quad \times \sum_{m_{k-2} = 1}^{m_{k-3}} \binom{m_{k-3}}{m_{k-2}}\theta^{m_{k-3}-m_{k-2}}_{k-2}(a,b)\theta^{m_{k-2}}_{k-1}(a,b) \\
			&\quad\quad \left. + \sum_{m_1 = 1}^p \binom{p}{m_1}\theta^{p-m_1}_1(a,b) \cdots \sum_{m_j = 1}^{m_{j-1}} \binom{m_{j-1}}{m_j}\theta^{m_i-m_{j-1}}_j(a,b) \cdots \right. \\
			&\quad\quad \left. \times \sum_{m_{n-1} = 1}^{m_{n-2}} \binom{m_{n-2}}{m_{n-1}}\theta^{m_{n-2}-m_{n-1}}_{n-1}(a,b)\theta^{m_{n-1}}_n(a,b) \right] \\
		&= \theta_1(a,b) + \theta_2(a,b) + \cdots + \theta_j(a,b) + \cdots + \theta_{n+1}(a,b) \\
		&= \sum_{k=1}^{n+1}\theta_k(a,b) .
\end{align}
\end{proof}

\begin{lem}
For $u_0(a,b) = 0$ one has that $u_{n+1}(a,b; x) > u_n(a,b; x)$ for all $n \geq 0, x > 0$ (strict pointwise increase in $n$ for all $x > 0$) for all $a > a_{0}$.
\end{lem}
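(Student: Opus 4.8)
The plan is to avoid grinding through the multi-sum expansion of $\theta_{n+1}(a,b)$ from the previous lemma and instead subtract consecutive iterates directly. Since the boundary term $\widehat{\psi}_{-a}b^p$ is independent of $n$, it cancels and one is left with
\begin{align}
u_{n+1}(a,b) - u_n(a,b) = \widehat{R}^{L_0}_{-a}\bigl(u_n^p(a,b) - u_{n-1}^p(a,b)\bigr), \qquad n \ge 1,
\end{align}
while the base case $n = 0$ is simply $u_1(a,b) - u_0(a,b) = u_1(a,b) = \widehat{\psi}_{-a}b^p$. I would therefore argue by induction on $n$, carrying the hypothesis that $u_n(a,b;x) > u_{n-1}(a,b;x) \ge 0$ for all $x > 0$ (recall each $u_n = \widehat{u}_n$ vanishes at $x = 0$).

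For the base case, since $\psi_{-a}(x) > 0$ for all $x$ (as already used above) and $b > 0$, one has $u_1(a,b;x) = \psi_{-a}(x)\,b^p > 0 = u_0(a,b;x)$ for every $x > 0$. For the inductive step, the hypothesis $u_n(a,b;x) > u_{n-1}(a,b;x) \ge 0$ on $x > 0$ together with the strict monotonicity of $t \mapsto t^p$ on $[0,\infty)$ gives $g := u_n^p(a,b) - u_{n-1}^p(a,b) \ge 0$ everywhere, with $g(0) = 0$ and $g(x) > 0$ for all $x > 0$. It then remains to show that applying $\widehat{R}^{L_0}_{-a}$ to such a $g$ produces a sequence that is strictly positive on $x > 0$.

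This positivity of the resolvent is the crux, and I would establish it by a discrete maximum principle rather than by an explicit Green's function, in keeping with the elementary spirit of the paper. Set $w := R^{L_0}_{-a}g \in \mathscr{H}$, so that $(L_0 + a)w = g$ and $(u_{n+1} - u_n)(x) = w(x)$ for $x > 0$. Rearranging the equation gives $(2x + 1 + a)w(x) = g(x) + (x+1)w(x+1) + x\,w(x-1)$ for $x > 0$, and $(1+a)w(0) = g(0) + w(1) = w(1)$ at the boundary. Because $w \in \ell^2$ we have $w(x) \to 0$, so if $w$ were negative somewhere its infimum would be negative and attained at some finite site $x_0$; evaluating the equation at $x_0$ and using $w(x_0) \le w(x_0 \pm 1)$ together with $g \ge 0$ forces $a\,w(x_0) \ge 0$, a contradiction (the boundary site $x_0 = 0$ is handled identically). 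Hence $w \ge 0$. Strict positivity on $x > 0$ then follows by a second application: at any interior zero $x_1 > 0$ of the nonnegative $w$, the rearranged equation yields $0 = g(x_1) + (x_1+1)w(x_1+1) + x_1 w(x_1-1) \ge g(x_1) > 0$, which is impossible. This completes the induction.

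The main obstacle is precisely this last step: one must verify that the maximum principle is valid on the infinite half-lattice (attainment of the negative infimum is what the $\ell^2$ decay of $w$ secures), that the modified boundary relation at $x = 0$ is correctly incorporated, and that nonnegativity is upgraded to the strict positivity on $x > 0$ asserted in the lemma. Everything else reduces to the algebra of the difference of iterates and the already-established positivity of $\psi_{-a}$.
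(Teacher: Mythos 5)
Your proof is correct, and it takes a genuinely different route from the paper's. The paper disposes of this lemma in one line: by the preceding expansion lemma, $u_{n+1}(a,b)-u_n(a,b)=\theta_{n+1}(a,b)$, and $\theta_{n+1}(a,b;x)>0$ for all $x>0$. That positivity assertion is really an appeal to the positivity-preserving property of $\widehat{R}^{L_0}_{-a}$, since each $\theta_k$ is the resolvent applied to positive combinations of the earlier $\theta_j$, seeded by $\widehat{\psi}_{-a}b^p$; the paper only makes this property explicit later, through the Sturm--Liouville representation of the kernel, from which $R^{L_0}_{-a}(x_1,x_2)>0$. You bypass the combinatorial expansion entirely: subtracting consecutive iterates gives $u_{n+1}-u_n=\widehat{R}^{L_0}_{-a}\bigl(u_n^p-u_{n-1}^p\bigr)$, and the crux --- that $R^{L_0}_{-a}$ maps a nonnegative $g$ vanishing only at $x=0$ to a function strictly positive for $x>0$ --- is proved by a discrete minimum principle: the $\ell^2$ decay of $w=R^{L_0}_{-a}g$ guarantees that a negative infimum would be attained at a finite site, the interior relation $(2x+1+a)w(x)=g(x)+(x+1)w(x+1)+xw(x-1)$ and the boundary relation $(1+a)w(0)=w(1)$ each force a contradiction there, and a second pass upgrades $w\ge 0$ to strict positivity off the origin. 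Both arguments ultimately rest on resolvent positivity, but yours derives it from the equation itself rather than from the explicit Green's function, so it is self-contained at this point in the paper (no forward reference to the Sturm--Liouville lemma) and in fact yields the conclusion for every $a>0$, not only $a>a_0$. What the paper's route buys in exchange is the explicit series $u_n=\sum_{k=1}^n\theta_k$, which it reuses later, e.g.\ to see that $u_*(a,b;x)$ is a power series in $b$ with positive coefficients and hence monotone in $b$. Two details you rightly flag and should keep in a final write-up: $u_n^p-u_{n-1}^p\in\ell^2$ (since $u_n\in\ell^2\subset\ell^\infty$), so $w$ is a genuine $\ell^2$ vector in $\mathcal{D}(L_0)$ and the pointwise equations are licit; and the induction hypothesis must carry the strict inequality $u_n>u_{n-1}\ge 0$ on $x>0$ so that strict monotonicity of $t\mapsto t^p$ gives $g>0$ off the origin --- both of which your argument does.
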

	
\begin{proof}
Since $\theta_n(a,b; x) > 0$, $\forall x>0$, and $u_{n+1}(a,b) - u_n(a,b) = \theta_{n+1}(a,b)$ it must be the case that the sequence $\{ u_n(a,b; x) \}_{n=0}^\infty$ is strictly increasing in $n$ for all $x>0$ and for all $a > a_{0}$.
\end{proof}
	
\begin{lem}
For $u_0(a,b) = 0$ and all $a>a_{0}$ one has that $\lim_{n \nearrow \infty} u_n(a,b)  = \sum_{n=1}^\infty \theta_n(a,b) \in \ell^1$.
\end{lem}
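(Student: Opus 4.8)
The plan is to deduce both the existence of the limit and its $\ell^1$ membership from two facts already in hand: the iterates increase pointwise in $n$, and their $\ell^1$ norms are uniformly bounded. Since $u_0(a,b)=0$, the preceding lemma identifies $u_n(a,b)=\sum_{k=1}^n\theta_k(a,b)$, so the claimed identity $\lim_{n\nearrow\infty}u_n(a,b)=\sum_{n=1}^\infty\theta_n(a,b)$ merely records that the partial sums converge; the substance is that the limit belongs to $\ell^1$. I would fix $a>a_0$ throughout.

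First I would establish convergence coordinatewise. For each fixed $x>0$ the sequence $\{u_n(a,b;x)\}_{n\ge 0}$ is nonnegative and strictly increasing in $n$ by the monotonicity lemma (as $\theta_{n+1}(a,b;x)>0$), while it is bounded above since
\begin{align}
0\le u_n(a,b;x)\le\sum_{y=0}^\infty u_n(a,b;y)=\left|\left|u_n(a,b)\right|\right|_1\le s_n(a)\le s_-(a)<\infty,
\end{align}
using the bound $\left|\left|u_n(a,b)\right|\right|_1\le s_n(a)$ together with $s_n(a)\le s_-(a)$ (the $s_n(a)$ increase to $s_-(a)$ from below). A bounded monotone real sequence converges, so $u_*(a,b;x):=\lim_{n\nearrow\infty}u_n(a,b;x)$ exists and is finite for every $x>0$; at $x=0$ one has $u_n(a,b;0)=0$ for all $n$ because $u_n(a,b)=\widehat{u}_n(a,b)$, so the limit there is trivial.

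It remains to pass from the pointwise limit to $\ell^1$ membership, and this interchange of $\lim_n$ with $\sum_x$ is the only delicate point. Because $0\le u_n(a,b;\cdot)\nearrow u_*(a,b;\cdot)$, the monotone convergence theorem on $\mathbb{Z}_+$ with counting measure gives
\begin{align}
\left|\left|u_*(a,b)\right|\right|_1=\sum_{x=0}^\infty\lim_{n\nearrow\infty}u_n(a,b;x)=\lim_{n\nearrow\infty}\left|\left|u_n(a,b)\right|\right|_1\le s_-(a)<\infty,
\end{align}
so $u_*(a,b)=\sum_{n=1}^\infty\theta_n(a,b)\in\ell^1$. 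The nonnegativity and monotonicity are exactly what remove any need for uniform or dominated control beyond the already-established norm bound, so the exchange is legitimate; as a byproduct the same monotone structure yields $\ell^1$-norm convergence, since $\left|\left|u_*(a,b)-u_n(a,b)\right|\right|_1=\left|\left|u_*(a,b)\right|\right|_1-\left|\left|u_n(a,b)\right|\right|_1\to 0$.
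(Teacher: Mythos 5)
Your proposal is correct and takes essentially the same route as the paper: both rest on the pointwise monotone increase of $u_n(a,b;x)$ in $n$ together with the uniform bound $\left|\left|u_n(a,b)\right|\right|_1 \le s_n(a) \le s_-(a) = s_*(a) < \infty$, and then pass this bound to the limit. Your explicit appeal to the monotone convergence theorem cleanly justifies the interchange of $\lim_n$ with $\sum_x$ that the paper's terse proof leaves implicit, and the extra observation of $\ell^1$-norm convergence is a correct (if unneeded) bonus.
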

	
\begin{proof}
Let $a>a_{0}$. One may observe that that $\lim_{n \nearrow \infty} \left|\left| u_n(a,b) \right|\right|_{1} \le s_*(a) < \infty \Rightarrow \lim_{n \nearrow \infty} u_n(a,b; x) < s_*(a) < \infty$ for all $x \in \mathbb{Z}_{+}$. Then by monotonic increase of $u_n(a,b; x)$ in $n$ for all $x > 0$, it follows that $u_n(a,b; x) = \sum_{k=1}^n\theta_k(a,b; x)$ exists for each $n,x$ and uniquely determines $\lim_{n \nearrow \infty} u_n(a,b)$.
\end{proof}

\begin{lem}
Let $u_{*}(a,b) := \sum_{n=1}^\infty \theta_n(a,b) = \lim_{n \nearrow \infty} u_n(a,b)$ for all $a > a_{0}$. One has that $u_*(a,b; x)$ is a monotonically increasing function in $b \geq 0$ for all $x > 0$ and for all $a > a_{0}$.
\end{lem}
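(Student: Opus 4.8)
The plan is to establish monotonicity at the level of each iterate $u_n(a,b)$ and then transfer it to the limit $u_*(a,b)$. The engine of the argument is that every operation appearing in the iteration $u_{n+1}(a,b) = \widehat{\psi}_{-a} b^p + \widehat{R}^{L_0}_{-a} u_n^p(a,b)$ — multiplication by $b^p$, taking $p$-th powers of nonnegative sequences, and applying $\widehat{R}^{L_0}_{-a}$ — preserves the pointwise order on nonnegative sequences. Throughout I fix $a > a_0$, for which the limit $u_*(a,b) = \lim_{n \nearrow \infty} u_n(a,b)$ is known to exist by the preceding lemmas.

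First I would record the order-preserving (monotonicity) property of the resolvent: if $0 \le v(x) \le w(x)$ for all $x > 0$, then $(\widehat{R}^{L_0}_{-a} v)(x) \le (\widehat{R}^{L_0}_{-a} w)(x)$ for all $x > 0$. This is precisely the statement that the matrix elements $\widehat{R}^{L_0}_{-a}(x,y)$ are nonnegative, which is the same positivity already exploited in this section: the positivity of each $\theta_k(a,b;x)$ for $x > 0$ is exactly $\widehat{R}^{L_0}_{-a}$ carrying nonnegative combinations of nonnegative sequences to nonnegative sequences. Thus no new input is needed — I would either cite this positivity directly or note that it follows from $\psi_{-a} \ge 0$ together with the tridiagonal structure of $L_0 + a$ at the sub-threshold point $-a < 0 = \inf\sigma(L_0)$.

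Next I would induct on $n$, starting from $u_0(a,b) = 0$, to show that for fixed $x > 0$ the map $b \mapsto u_n(a,b;x)$ is nondecreasing on $b \ge 0$. The base case is trivial. For the inductive step, take $b_1 \le b_2$ in $[0,\infty)$. The first term of $u_{n+1}$ is increasing in $b$ since $\widehat{\psi}_{-a}(x) > 0$ for $x > 0$ and $b \mapsto b^p$ is increasing on $[0,\infty)$ for $p > 1$. For the second term, the inductive hypothesis gives $0 \le u_n(a,b_1;x) \le u_n(a,b_2;x)$ for all $x$; since $t \mapsto t^p$ is increasing for $t \ge 0$, this yields $u_n^p(a,b_1) \le u_n^p(a,b_2)$ pointwise, and the order-preserving property of $\widehat{R}^{L_0}_{-a}$ then gives $\widehat{R}^{L_0}_{-a} u_n^p(a,b_1) \le \widehat{R}^{L_0}_{-a} u_n^p(a,b_2)$ pointwise. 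Hence $u_{n+1}(a,b_1;x) \le u_{n+1}(a,b_2;x)$ for all $x > 0$.

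Finally I would pass to the limit: since $u_*(a,b;x) = \lim_{n \nearrow \infty} u_n(a,b;x)$ exists and each $u_n(a,\cdot;x)$ is nondecreasing, the pointwise limit $u_*(a,\cdot;x)$ is nondecreasing, which is the claimed monotonicity. If one wishes to upgrade to strict monotonicity for $b > 0$, observe that the $k=1$ contribution $\theta_1(a,b;x) = \widehat{\psi}_{-a}(x)\, b^p$ is strictly increasing in $b$ for $x > 0$, while every remaining $\theta_k$, $k \ge 2$, is nondecreasing by the same induction, so for $b_1 < b_2$ the inequality for $u_* = \sum_{k \ge 1}\theta_k$ is strict. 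The only genuinely substantive point is the order-preserving property of $\widehat{R}^{L_0}_{-a}$, and since this is the nonnegativity of the resolvent kernel already in force throughout the section, the remaining argument is essentially just the stability of pointwise monotonicity under the iteration.
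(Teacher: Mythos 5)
Your proof is correct, but it takes a genuinely different route from the paper's. The paper disposes of the lemma in one line: using the expansion $u_*(a,b;x)=\sum_{n\ge 1}\theta_n(a,b;x)$ established earlier, it observes that each $\theta_n(a,b;x)$ --- being built from $\widehat{\psi}_{-a}b^p$ by repeated $p$-th powers, positive binomial coefficients, and applications of the positive-kernel operator $\widehat{R}^{L_0}_{-a}$ --- is a polynomial in $b$ with only positive powers and positive coefficients, so $u_*(a,b;x)$ is a power series in $b$ with positive coefficients and is therefore (strictly) increasing on $b\ge 0$ wherever it converges. You instead argue directly from the recursion: an induction showing each $b\mapsto u_n(a,b;x)$ is nondecreasing, using order-preservation of the positive resolvent kernel together with monotonicity of $t\mapsto t^p$ on $t\ge 0$, followed by passage to the pointwise limit. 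Both arguments rest on exactly the same positivity input (positivity of $\widehat{\psi}_{-a}(x)$ and of the matrix elements of $\widehat{R}^{L_0}_{-a}$), but yours bypasses the combinatorial $\theta_k$-expansion entirely (except in your optional strictness upgrade) and makes the limit passage explicit, while the paper's is shorter and gets strictness for free from the nonzero $b^p$ term. One small point of looseness: your claim that each $\theta_k$, $k\ge 2$, is ``nondecreasing by the same induction'' does not quite follow, since your induction concerns $u_n$, not the individual $\theta_k$; for the strictness upgrade it is cleaner to write $u_n-\theta_1=\widehat{R}^{L_0}_{-a}u_{n-1}^p$, which is nondecreasing in $b$ by your inductive hypothesis, so that the strict increase of $\theta_1(a,\cdot;x)$ survives in the limit $u_*=\theta_1+\lim_n(u_n-\theta_1)$.
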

	
\begin{proof}
$u_*(a,b; x) := \sum_{n=1}^\infty \theta_n(a,b; x)$ can be represented as a power series in $b$ for all $x > 0$ with only positive powers and positive coefficients.
\end{proof}

\subsection{At the boundary}

We now consider the equation of form (I). We can project it onto a ``boundary'' piece by applying $P_{0}$:
\begin{align}
	0 = (P_{0} + P)(-L_0u - au + u^p) \quad \Rightarrow \quad 0 = u^p(0) - (a+1)u(0) + u(1)
\end{align}

Given $u_{*}(a,b) = \sum_{k=1}^\infty \theta_k(a,b)$ we will substitute $b = u(0)$ and $q(a,b) \equiv u_*(a,b; 1) = u(1)$ in the boundary equation and thereby consider
\begin{align}
	0 = b^p - (a+1)b + q(a,b)
\end{align}
on the interval $a^{(p-1)^{-1}} \leq b \leq (a+1)^{(p-1)^{-1}}$.

\begin{defn}
\begin{align}
	&b_-(a) := a^{(p-1)^{-1}},\quad b_+(a) := (a+1)^{(p-1)^{-1}} \\
	&\Sigma_a := \left\{ b \in \mathbb{R} : b_-(a) \leq b \leq b_+(a) \right\} \\
	&q_-(a,b) := 0,\quad q(a,b) := u_*(a,b; 1),\quad q_+(a,b) := b \\
	&f(a,b,q) := b^p - (a+1)b + q \\
	&f_-(a,b) := f(a,b, q_-(a,b)),\quad f_*(a,b) := f(a,b, q(a,b)), \\
		&\quad f_+(a,b) := f(a,b, q_+(a,b))
\end{align}
\end{defn}
	
\begin{figure}
	\centering
	\includegraphics[height=70mm]{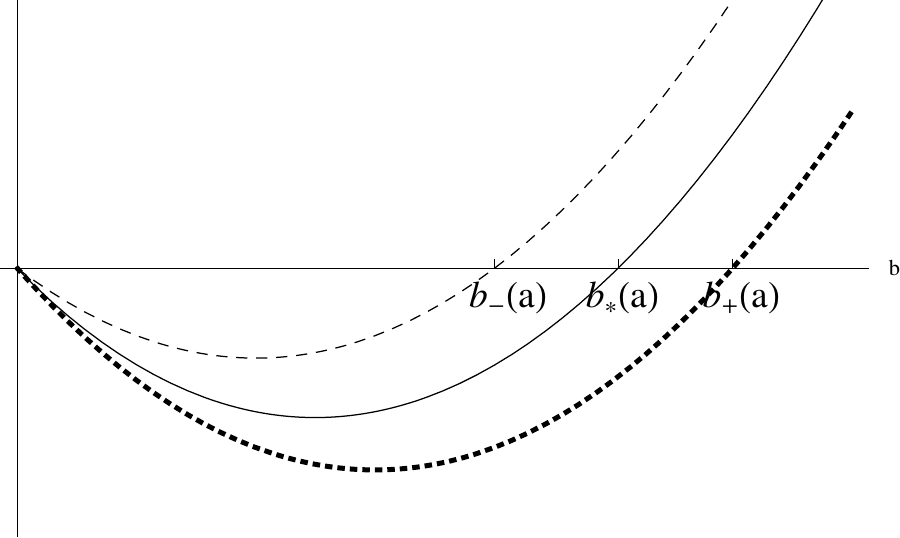}
	\caption{The dashed line is the graph of $f_+(a,b)$, the solid is of $f(a,b)$, and the dotted is of $f_-(a,b)$.}
\end{figure}

\begin{lem}
$q(a,b) > q_-(a,b)$ for all $ b \in \Sigma_a$ and for all $a>a_{0}$.
\end{lem}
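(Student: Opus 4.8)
The plan is to reduce the claim to the strict positivity of the leading term in the series representation of $q(a,b)$. For $a > a_{0}$ the iteration started from $u_0(a,b) = 0$ converges and yields $u_*(a,b) = \sum_{n=1}^\infty \theta_n(a,b)$, so by definition $q(a,b) = u_*(a,b;1) = \sum_{n=1}^\infty \theta_n(a,b;1)$, whereas $q_-(a,b) = 0$. It therefore suffices to show that this series is strictly positive for every $b \in \Sigma_a$.

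First I would isolate the leading summand $\theta_1(a,b;1) = \widehat{\psi}_{-a}(1)\, b^p$. Since the site $x = 1$ lies strictly away from the boundary, the projection $P$ acts as the identity there and $\widehat{\psi}_{-a}(1) = \psi_{-a}(1)$. From the computation establishing $||\psi_{-a}||_{1} = a^{-1}$ we already have $\psi_{-a}(x) > 0$ for every $x \in \mathbb{Z}_+$ and every $a > 0$, and in particular $\psi_{-a}(1) > 0$. Next I would note that $b \in \Sigma_a$ forces $b \geq b_-(a) = a^{(p-1)^{-1}} > 0$, so that $b^p > 0$ and hence $\theta_1(a,b;1) = \psi_{-a}(1)\, b^p > 0$.

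Finally, the remaining terms only help: each $\theta_n(a,b;x) > 0$ for $x > 0$ --- the same positivity that drove the monotone-increase lemma --- so every summand $\theta_n(a,b;1)$ is nonnegative and the series is bounded below by its first term. This gives $q(a,b) \geq \theta_1(a,b;1) = \psi_{-a}(1)\, b^p > 0 = q_-(a,b)$, which is the assertion. There is essentially no obstacle here: the whole content is the positivity of $\psi_{-a}$ and of $b$ on $\Sigma_a$, both already in hand. The only point deserving a word of care is that $x = 1$ is a tail site, so the hat-projection does not annihilate the leading contribution $\widehat{\psi}_{-a}(1)\, b^p$; were we instead evaluating at $x = 0$ this argument would fail, which is exactly why the boundary piece is treated separately.
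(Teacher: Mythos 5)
Your proof is correct and follows essentially the same route as the paper: both bound $q(a,b) = \sum_{n\ge 1}\theta_n(a,b;1)$ below by the leading term $\theta_1(a,b;1) = \widehat{\psi}_{-a}(1)\,b^p$ using positivity of the remaining $\theta_n(a,b;1)$, and then show that term is strictly positive. The only (cosmetic) difference is the last step: the paper writes $\widehat{\psi}_{-a}(1) = e^a\left[E_1(a) - E_2(a)\right]$ and cites the inequality $E_2(a) < E_1(a)$, whereas you invoke the positivity $\psi_{-a}(x) > 0$ already established via the integral representation in the $||\psi_{-a}||_{1} = a^{-1}$ lemma --- both are facts the paper has in hand, so either justification is fine.
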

	
\begin{proof}
Let $a>a_{0}$.
\begin{align}
	q(a,b) &= \sum_{n = 1}^\infty \theta_n(a,b; 1) >  \theta_1(a,b; 1) = \widehat{\psi}_{-a}(1) b^p \\
		&= e^a\left[ E_1(a) - E_2(a) \right] b^p > 0 = q_-(a,b)
\end{align}
since $E_2(a) < E_1(a),\ \forall a>0$ \cite{AandS}.
\end{proof}
	
\begin{lem}
There exists an $a_{1} \geq a_{0}$ such that $q(a,b) < q_+(a,b) $ for all $b \in \Sigma_a$ and for all $a > a_{1}$.
\end{lem}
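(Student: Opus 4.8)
The plan is to bound the single pointwise value $q(a,b) = u_*(a,b;1)$ by the global $\ell^1$ size of $u_*(a,b)$, which has already been controlled, and then to observe that this global bound already sits below the \emph{lower} endpoint $b_-(a)$ of $\Sigma_a$. Since $q_+(a,b) = b \ge b_-(a)$, this forces $q(a,b) < q_+(a,b)$ with room to spare, and in fact uniformly in $a > a_0$, so that one may take $a_1 = a_0$.

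Concretely, I would first use that every $\theta_n(a,b;\cdot) \ge 0$, whence $u_*(a,b;\cdot) \ge 0$ and the value at a single site is dominated by the full sum: $q(a,b) = u_*(a,b;1) \le ||u_*(a,b)||_1$. Next I would pass the established $\ell^1$ bound to the limit. Since $u_0(a,b) = 0$ and $s_0(a) = 0 \le s_-(a)$, the comparison lemma gives $||u_n(a,b)||_1 \le s_n(a)$ for every $n$, and $s_n(a) \nearrow s_*(a) = s_-(a)$; by monotone convergence $||u_*(a,b)||_1 = \lim_n ||u_n(a,b)||_1 \le s_*(a) = s_-(a)$, and the root lemma already gives $s_-(a) < s_\mathrm{min}(a) = (a/p)^{(p-1)^{-1}}$. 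The final, purely algebraic, step is $s_\mathrm{min}(a) = p^{-(p-1)^{-1}} a^{(p-1)^{-1}} < a^{(p-1)^{-1}} = b_-(a)$, which holds because $1 < p \in \mathbb{Z}$ forces $p^{-(p-1)^{-1}} < 1$. Chaining these yields $q(a,b) \le ||u_*(a,b)||_1 \le s_-(a) < s_\mathrm{min}(a) < b_-(a) \le b = q_+(a,b)$ for all $b \in \Sigma_a$.

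Because this chain is valid for every $a > a_0$, I expect no genuine largeness of $a$ to be required beyond $a_1 = a_0$. The only place membership $b \in \Sigma_a$ is actually used is the upper bound $b \le b_+(a)$, which is precisely what let the earlier lemma control $||\widehat{\psi}_{-a} b^p||_1 < a^{-1} r(a)$ and drive the recursion $||u_{n+1}||_1 \le a^{-1} r(a) + a^{-1} ||u_n||_1^p \le s_{n+1}(a)$; the one routine point to flag there is $||u_n^p||_1 \le ||u_n||_1^p$, valid for nonnegative sequences with $p \ge 1$. The main conceptual obstacle is thus not largeness of $a$ but the soundness of replacing the pointwise quantity $q(a,b)$ by the coarse $\ell^1$ bound, which works exactly because $s_\mathrm{min}(a)$ already lies below $b_-(a)$. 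Were this coarse estimate to fail to beat $b$ — as it would if one instead estimated $q(a,b)$ termwise through $\theta_1(a,b;1)$ plus a tail — one would be forced to extract smallness from large $a$, and a strict $a_1 > a_0$ would enter through making the higher-order terms $\theta_k$, $k \ge 2$, subdominant.
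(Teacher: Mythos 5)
Your proof is correct, and it actually proves something stronger than the paper does. The paper's own argument follows the same skeleton as yours — dominate the pointwise value $q(a,b) = u_*(a,b;1)$ by the $\ell^1$ bound $s_*(a)$ (the paper simply asserts ``Clearly $q(a,b) < s_*(a)$'') and then compare against $b_-(a) \le b = q_+(a,b)$ — but it closes the key inequality $s_*(a) < b_-(a)$ softly: since $s_*(a) \searrow 0$ and $b_-(a) \nearrow \infty$ as $a \nearrow \infty$, some threshold $a_1 \ge a_0$ must exist, with no control on its size. You close the same gap algebraically: for every $a > a_0$ the root lemma already places $s_*(a) = s_-(a)$ strictly below the minimizer $s_{\mathrm{min}}(a) = (a/p)^{(p-1)^{-1}}$, and $s_{\mathrm{min}}(a) = p^{-(p-1)^{-1}} b_-(a) < b_-(a)$ because $p \ge 2$. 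This yields the chain $q(a,b) \le ||u_*(a,b)||_1 \le s_-(a) < s_{\mathrm{min}}(a) < b_-(a) \le b = q_+(a,b)$ for all $a > a_0$ and all $b \in \Sigma_a$, hence the lemma holds with $a_1 = a_0$, whereas the paper obtains only a nonconstructive $a_1$. Your supporting steps are sound and are exactly the ones the paper uses tacitly: $q(a,b) \le ||u_*(a,b)||_1$ by positivity of the $\theta_k$; $||u_n(a,b)||_1 \le s_n(a)$ from the comparison lemma with $u_0 = 0$, $s_0 = 0$; passage to the limit by monotone convergence of the nonnegative, pointwise-increasing $u_n$; and $||u_n^p||_1 \le ||u_n||_1^p$ for nonnegative sequences. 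As a bonus, your observation subsumes the later lemma in the monotonicity section (existence of $a_4 \ge a_3$ with $s_*(a) < b_-(a)$ for all $a > a_4$), which the paper again proves by the asymptotic argument: your chain shows $a_4 = a_0$ suffices there as well, so the trade-off is that the paper's soft argument needs less bookkeeping, while yours pins the largeness threshold entirely on $a_0$, i.e.\ on the existence of the fixed point $s_-(a)$.
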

	
\begin{proof}
Let $a > a_{0}$. It is the case that $\lim_{a \nearrow \infty} s_*(a) = 0$ monotonically and $\lim_{a \rightarrow \infty} b_-(a) = \infty$ monotonically so there exists an $a_{1} \ge a_{0}$ such that $s_*(a) < b_-(0),\ \forall a > a_{1}$. Clearly $q(a,b) < s_*(a)$. Then $q(a,b) < s_*(a) < b_-(a) \leq b = q_+(a,b),\ \forall a > a_{1}$ and $\forall b \in \Sigma_a$.
\end{proof}

\begin{lem}
Let $a_{1} > 0$ be the smallest value for which $a_{1} \ge a_{0}$ and for which $q(a,b) < q_+(a,b) $ for all $b \in \Sigma_a$ and for all $a > a_{1}$. Define $a_{2} := (p-1)^{-1}$, $a_{3} := \max\{a_{1},a_{2}\}$. One has that:
\begin{itemize}
	\item $f_-(a,b)$ is negative and monotonically increasing for all $b \in \Sigma_a \setminus b_+(a)$ and for all $a > a_{3}$.
	\item $f_+(a,b)$ is positive and monotonically increasing in $b$ for all $b \in \Sigma_a \setminus b_-(a)$ and for all $a > a_{3}$.
\end{itemize}
\end{lem}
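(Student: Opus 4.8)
The plan is to evaluate $f_-$ and $f_+$ in closed form and read off both the sign and the monotonicity from elementary calculus, the decisive structural fact being that on $\Sigma_a$ the quantity $b^{p-1}$ is trapped between $a$ and $a+1$. First I would substitute the definitions $q_-(a,b)=0$ and $q_+(a,b)=b$ into $f(a,b,q)=b^p-(a+1)b+q$ to obtain the factorizations $f_-(a,b)=b^p-(a+1)b=b\,(b^{p-1}-(a+1))$ and $f_+(a,b)=b^p-ab=b\,(b^{p-1}-a)$.

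Next I would exploit the defining inequalities of $\Sigma_a$. Since $b\in\Sigma_a$ means $b_-(a)=a^{(p-1)^{-1}}\le b\le(a+1)^{(p-1)^{-1}}=b_+(a)$, raising to the power $p-1$ gives $a\le b^{p-1}\le a+1$, with $b^{p-1}=a$ exactly when $b=b_-(a)$ and $b^{p-1}=a+1$ exactly when $b=b_+(a)$. Because $a>0$ forces $b>0$ throughout $\Sigma_a$, the factorizations immediately yield the sign statements: $f_-(a,b)=b\,(b^{p-1}-(a+1))\le 0$ with equality only at $b=b_+(a)$, hence $f_-<0$ on $\Sigma_a\setminus b_+(a)$; and $f_+(a,b)=b\,(b^{p-1}-a)\ge 0$ with equality only at $b=b_-(a)$, hence $f_+>0$ on $\Sigma_a\setminus b_-(a)$. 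None of this requires a lower bound on $a$ beyond $a>0$.

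For monotonicity I would differentiate in $b$: $\partial_b f_-(a,b)=p\,b^{p-1}-(a+1)$ and $\partial_b f_+(a,b)=p\,b^{p-1}-a$. On $\Sigma_a$ the bound $b^{p-1}\ge a$ gives $\partial_b f_-(a,b)\ge pa-(a+1)=(p-1)a-1$ and $\partial_b f_+(a,b)\ge pa-a=(p-1)a$. The second is strictly positive for every $a>0$, so $f_+$ increases on all of $\Sigma_a$ with no threshold. The first is strictly positive precisely when $(p-1)a>1$, i.e. when $a>(p-1)^{-1}=a_2$; this is the sole place a genuine lower bound on $a$ enters, and it is exactly why $a_2$ is built into $a_3=\max\{a_1,a_2\}$. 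Since $a>a_3\ge a_2$, both derivatives are positive and both functions are monotonically increasing on the stated sets.

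The argument is entirely elementary, so I do not expect a genuine obstacle; the only point demanding care is identifying which endpoint of $\Sigma_a$ is binding for each claim. The left endpoint $b_-(a)$, where $b^{p-1}=a$, controls both the vanishing of $f_+$ and the worst case of $\partial_b f_-$, whereas the right endpoint $b_+(a)$, where $b^{p-1}=a+1$, controls the vanishing of $f_-$. The hypothesis $a_1$ (coming from the preceding lemma $q<q_+$) is not actually needed for this statement, but carrying $a>a_3$ keeps the hypotheses aligned with the root-finding step to follow, where the chain $f_-<f_*<f_+$ together with these sign and monotonicity facts will localize the unique solution $b_*(a)$ via the intermediate value theorem.
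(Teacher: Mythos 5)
Your proposal is correct, and the monotonicity half coincides exactly with the paper's argument: the same derivatives $\partial_b f_-(a,b)=pb^{p-1}-(a+1)$ and $\partial_b f_+(a,b)=pb^{p-1}-a$, the same lower bound via $b^{p-1}\ge a$ on $\Sigma_a$, and the same identification of $a_2=(p-1)^{-1}$ as the only genuine threshold. Where you diverge is in the sign claims: the paper deduces them from monotonicity combined with the endpoint zeros $f_-(a,b_+(a))=0$ and $f_+(a,b_-(a))=0$ (an increasing function vanishing at the right endpoint must be negative to its left, and symmetrically for $f_+$), whereas you read the signs directly off the factorizations $f_-(a,b)=b\,(b^{p-1}-(a+1))$ and $f_+(a,b)=b\,(b^{p-1}-a)$ together with the trapping $a\le b^{p-1}\le a+1$. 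Your route decouples sign from monotonicity, which makes visible something the paper's logic obscures: the sign statements hold for every $a>0$ with no threshold at all, and only the monotonic increase of $f_-$ requires $a>a_2$. This is a marginally sharper and cleaner bookkeeping of where each hypothesis is used, though it changes nothing downstream, since the next lemma needs both sign and monotonicity past $a_3$ anyway. Your closing remark that $a_1$ is irrelevant to this particular lemma is also accurate; it enters only through the chain $f_-<f_*<f_+$ (noted in the paper's proof but not actually used for either bullet), which becomes essential in the subsequent root-finding step that localizes $b_*(a)$.
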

	
\begin{proof}
One may observe that $f_-(a,b) < f_*(a,b) < f_+(a,b)$ for all $a>a_{1}$ and all $b \in \Sigma_a$. $f_-(a,b_+(a)) = 0$ and $f_+(a,b_-(a))=0$ for all $a \geq 0$.

\underline{For $f_-(a,b)$:}
Let $a > a_{3}$. It follows that
\begin{align}
	\partial_b f_-(a,b) &= p b^{p-1} - (a+1) > p b_-^{p-1}(a) - (a+1) = a(p-1) - 1.
\end{align}
Then $\partial_bf_-(a,b) > 0,\ \forall b \in \Sigma_a$ and for all $a > a_{3}$. Since $f_-(a,b_+(a)) = 0$ one has that $f_-(a,b) < 0$ for all $b \in \Sigma_a \setminus b_+(a)$.

\underline{For $f_+(a,b)$:}
Let $a > a_{3}$.
\begin{align}
	\partial_bf_+(a,b) &= p b^{p-1} - a \geq p b_-^{p-1}(a) - a = a(p-1) > 0,\ \forall a > 0.
\end{align}
Then since $f_+(a,b_-(a)) = 0$ one finds that $f_+(a,b) > 0$ $\forall b \in \Sigma_a \setminus b_-(a)$.
\end{proof}
	
\begin{lem}
$f_*(a,b)$ is monotonically increasing in $b$ and has exactly one root in $b$ for all $b \in \Sigma_a$ and for all $a > a_{3}$.
\end{lem}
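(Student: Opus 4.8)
The plan is to prove strict monotonicity and existence of a root separately, then combine them. The starting observation is that $f_*$ decomposes cleanly: since $f_-(a,b) = b^p - (a+1)b$ and $f_*(a,b) = b^p - (a+1)b + q(a,b)$, one has
\[
f_*(a,b) = f_-(a,b) + q(a,b),
\]
so that $\partial_b f_*(a,b) = \partial_b f_-(a,b) + \partial_b q(a,b)$. This reduces everything to controlling the two summands, each of which is handled by a result already in hand.

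First I would establish strict monotonicity. The preceding lemma's computation already gives $\partial_b f_-(a,b) = p b^{p-1} - (a+1) > p\, b_-^{p-1}(a) - (a+1) = a(p-1) - 1 > 0$ for all $b \in \Sigma_a$ and all $a > a_3$, the final inequality holding precisely because $a_3 \ge a_2 = (p-1)^{-1}$. For the second term, recall that $q(a,b) = u_*(a,b;1)$ was shown to be representable as a power series in $b$ with only positive powers and positive coefficients; differentiating term by term gives $\partial_b q(a,b) > 0$ for $b > 0$. Both summands being positive, $\partial_b f_*(a,b) > 0$ throughout $\Sigma_a$, so $f_*(a,\cdot)$ is strictly increasing and therefore has at most one root.

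Next I would establish existence of a root via the intermediate value theorem, using the sandwich $f_-(a,b) < f_*(a,b) < f_+(a,b)$ for $b \in \Sigma_a$ (valid for $a > a_1$, hence for $a > a_3$) together with the boundary values $f_+(a,b_-(a)) = 0$ and $f_-(a,b_+(a)) = 0$ recorded earlier. At the left endpoint $f_*(a,b_-(a)) < f_+(a,b_-(a)) = 0$, while at the right endpoint $f_*(a,b_+(a)) > f_-(a,b_+(a)) = 0$. Since $f_*(a,\cdot)$ is continuous on $\Sigma_a$ (the polynomial part is continuous and $q(a,\cdot)$ is a convergent power series), the intermediate value theorem yields at least one root. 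Combining this with the uniqueness coming from strict monotonicity gives exactly one root, as claimed.

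I do not expect a genuine obstacle here, since every ingredient is already in place from the preceding lemmas. The only point requiring minor care is the justification of $\partial_b q(a,b) > 0$ together with the continuity needed for the intermediate value theorem; both follow from the established power-series representation of $q(a,b) = u_*(a,b;1)$ in $b$, noting that term-by-term differentiation is legitimate within the radius of convergence, which covers $\Sigma_a$ for the large values of $a$ under consideration.
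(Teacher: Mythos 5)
Your proposal is correct and follows essentially the same route as the paper's proof: the decomposition $f_*(a,b) = f_-(a,b) + q(a,b)$, monotonicity of $f_-$ from the preceding lemma together with monotonicity of $q$ from the positive power-series representation of $u_*(a,b;1)$, and then the sandwich $f_*(a,b_-(a)) < f_+(a,b_-(a)) = 0 = f_-(a,b_+(a)) < f_*(a,b_+(a))$ with the intermediate value theorem for existence and monotonicity for uniqueness. The extra details you supply (explicit derivative bounds and the continuity justification for the intermediate value theorem) are consistent with, and slightly more careful than, the paper's own argument.
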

	
\begin{proof}
Let $a > a_{3}$. $f_-(a,b)$ is monotonically increasing in $b$ for all $b \in \Sigma_a$. $q(a,b)$ is monotonically increasing in $b$ for all $b > 0$. $f_*(a,b) = f_-(a,b) + q(a,b)$ so $f_*(a,b)$ must be monotonically increasing in $b$ for all $b \in \Sigma_a$.

Since $f_*(a,b_-(a)) < 0 = f_+(b_-(a)) = f_-(a,b_+(a)) < f_*(a,b_+(a))$, by intermediate value theorem there must be at least one $b = b_*(a)$ for which $f_*(a,b) = 0$ and since $f_*(a,b)$ is monotonically increasing on this interval, there must be exactly one such $b = b_*(a)$.
\end{proof}

\begin{defn}\label{part01}
Let $b_*(a)$ be the unique value of $b \in \Sigma_a$ for which $f_*(a,b)=0$ for all $a > a_{3}$ and define $\alpha_{a} \in \mathscr{H}$ via
\begin{align}
	\alpha_{a}(x) := \left\{
	\begin{array}{cr}
		b_*(a) &,\ x=0\\
		u_*(a,b_*(a)) &,\ x > 0
	\end{array}\right. ,
\end{align}
for all $a > a_{3}$.
\end{defn}

\section{Properties of $\alpha_{\mu}$}

\subsection{Monotonicity}

Consider
\begin{align}\label{monotoneeq}
	-L_0u = V(u),
\end{align}
where $V(\cdot): \mathbb{R} \rightarrow \mathbb{R}$ satisfies $V(r) = 0$, $V(r_0) = 0$, and $V(r) > 0$ for all $r \in (0, r_0)$ where $r_0 > 0$. It was shown in \cite{DJN 1} that for any solution of this equation for which $\lim_{x \nearrow \infty} u(x) = 0$, there exists an $x_*$ such that $ u(x) $ is a monotonically decaying for increasing $x \geq x_*$. We will apply their argument, section 3 of \cite{DJN 1}, to show that for sufficiently large $a$ one must have that $\alpha_{a}(x)$ is monotonically decaying in $x$ for all $x \in \mathbb{Z}_+$. For our case we have that $V(r) = ar -r^p$, $r \in \mathbb{R}$, which satisfies the desired criteria.

Equation \eqref{monotoneeq} may be summed to give an alternative finite difference equation. One has the two forms:
\begin{align}
	(x+1)[u(x+1) - u(x)] - x[u(x) - u(x-1)] &= V(u(x)),\\
	u(x+1) - u(x) &= (x+1)^{-1} \sum_{y = 0}^x V(u_y).
\end{align}
The argument proceeds as follows.	
\begin{lem}
There exists an $a_{4} \ge a_{3}$ such that $s_*(a) < b_-(a)$ for all $a > a_{4}$.
\end{lem}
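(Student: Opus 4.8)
The plan is to deduce this directly from the root structure of $g(a,\cdot)$ that was already established, combined with an elementary comparison of power-law exponents; no new asymptotic analysis is needed, and in fact one may take $a_4 = a_3$.

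First I would recall the characterization of the fixed point. By the root-counting lemma, for every $a > a_0$ the smaller root $s_-(a)$ of $g(a,\cdot)$ obeys the strict bound $0 < s_-(a) < s_\mathrm{min}(a)$, where $s_\mathrm{min}(a) = (a/p)^{(p-1)^{-1}}$ is the location of the global minimum of $g(a,\cdot)$. Since $s_*(a) = s_-(a)$ by definition, and since $a_3 \ge a_0$, this already gives $s_*(a) < s_\mathrm{min}(a)$ for all $a > a_3$.

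Next I would compare $s_\mathrm{min}(a)$ with $b_-(a) = a^{(p-1)^{-1}}$ by factoring out the common power of $a$:
\begin{align*}
	s_\mathrm{min}(a) = (a/p)^{(p-1)^{-1}} = p^{-(p-1)^{-1}}\, a^{(p-1)^{-1}} = p^{-(p-1)^{-1}}\, b_-(a).
\end{align*}
Because $1 < p \in \mathbb{Z}$, the exponent $(p-1)^{-1}$ is positive and finite while $p > 1$, so the prefactor $p^{-(p-1)^{-1}}$ lies strictly in $(0,1)$; hence $s_\mathrm{min}(a) < b_-(a)$ for every $a > 0$.

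Chaining the two inequalities yields $s_*(a) = s_-(a) < s_\mathrm{min}(a) < b_-(a)$ for all $a > a_3$, so the claim holds with $a_4 := a_3$ (indeed any $a_4 \ge a_3$ works). There is no genuine obstacle here: the only two facts to verify are the strictness of $s_-(a) < s_\mathrm{min}(a)$, which is exactly what the earlier root-counting lemma supplies, and the strict inequality $p^{-(p-1)^{-1}} < 1$, which is immediate from $p \ge 2$. Thus the statement is essentially a corollary of the preceding results, and one could avoid any appeal to the monotone-decrease/growth argument (the fact that $s_*(a)$ decreases while $b_-(a) \nearrow \infty$) that one might otherwise be tempted to invoke.
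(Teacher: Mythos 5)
Your proof is correct, and it takes a genuinely different route from the paper. The paper's own argument is asymptotic: it invokes the earlier lemma that $s_*(a) \searrow 0$ monotonically as $a \nearrow \infty$ together with the obvious fact that $b_-(a) = a^{(p-1)^{-1}} \nearrow \infty$, and concludes that the inequality must hold for all sufficiently large $a$ --- exactly the argument you flagged as avoidable. Your proof instead exploits the root-location information already contained in the root-counting lemma: for $a > a_0$ the fixed point satisfies $s_*(a) = s_-(a) < s_\mathrm{min}(a) = p^{-(p-1)^{-1}} b_-(a) < b_-(a)$, since $p \ge 2$ forces $p^{-(p-1)^{-1}} \in (0,1)$. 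This is a pointwise bound valid for every $a > a_0$, not merely for large $a$, so it buys something concrete: one may take $a_4 = a_3$ (indeed the inequality already holds from $a_0$ on), which in turn makes the threshold $\mu_*$ in the subsequent monotonicity lemma explicit rather than merely existent. The paper's approach is shorter given that the monotone-decay lemma for $s_*$ was already established and is used elsewhere, but it is non-quantitative about where the inequality begins to hold; yours is both stronger and more self-contained, relying only on the strict separation $s_- < s_\mathrm{min} < s_+$ of the two roots of $g(a,\cdot)$ and an elementary comparison of prefactors.
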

	
\begin{proof}
Let $a > a_{3}$. It is the case that $s_*(a) \searrow 0$ monotonically as $a \nearrow \infty$. It is clear that $b_-(a) \nearrow \infty$ monotonically as $a \nearrow \infty$.
\end{proof}
	
\begin{lem}
	Let $u \in \mathscr{H}$ solve Equation \eqref{monotoneeq}. Let $c_-$ and $c_+$ be constants which satisfy $0 < c_- < u(0)$ and $0 < || P u || < c_+ < \infty$. If additionally $c_+ < c_-$ then $u(x)$ is monotonically decreasing in $x$ for all $x \in \mathbb{Z}_+$.
\end{lem}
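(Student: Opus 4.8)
The plan is to work entirely with the summed form of the equation already recorded just above the statement, namely $u(x+1)-u(x) = (x+1)^{-1}\sum_{y=0}^{x}V(u(y))$, and to abbreviate $S(x):=\sum_{y=0}^{x}V(u(y))$. Then the desired conclusion $u(x+1)-u(x)<0$ for every $x$ is exactly the assertion that $S(x)<0$ for all $x\in\mathbb{Z}_+$, so the whole lemma reduces to a sign analysis of the partial sums $S(x)$.

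First I would extract the pointwise content of the hypotheses. Since the $\ell^2$ norm dominates the supremum, $\|Pu\|<c_+$ forces $u(y)\le\|Pu\|<c_+$ for every tail index $y\ge1$, while $c_+<c_-<u(0)$ shows that $u(0)$ is the strict maximum of $u$. I would then position these values relative to the sign change of $V$ at $r_0$ (for the soliton nonlinearity $V(r)=ar-r^p$ one has $r_0=a^{(p-1)^{-1}}=b_-(a)$): the tail values are confined to the region $(0,r_0)$ where $V\ge0$, whereas $u(0)$ lies in the region $r>r_0$ where $V\le0$. That $u(0)>r_0$ is not assumed but forced: if the maximum $u(0)$ were $\le r_0$, then every $V(u(y))$ would be $\ge0$, giving $S(x)>0$ and hence $u$ strictly increasing, which contradicts $u(0)>0$ and $u(x)\to0$. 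Consequently $V(u(0))<0$, so $S(0)<0$, while $V(u(y))\ge0$ for all $y\ge1$ makes $S$ nondecreasing on $[0,\infty)$.

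It then remains to bound $S$ at infinity. I would show $S(\infty):=\lim_{x\to\infty}S(x)\le0$ directly: the limit exists because $S$ is eventually nondecreasing, and were it positive then $u(x+1)-u(x)=S(x)/(x+1)$ would dominate a constant multiple of $(x+1)^{-1}$ for large $x$, whose partial sums diverge, contradicting $u(x)\to0$. Combining $S(0)<0$, monotonicity of $S$, and $S(\infty)\le0$ yields $S(x)\le0$ for every $x$; a short rigidity step removes the equality case, since $S(x_0)=0$ would force $S\equiv0$ beyond $x_0$ (as $0\le S(x)\le S(\infty)\le0$ there), hence $u$ constant and then, running the recursion backward with $V(0)=0$, $u\equiv0$, contradicting $u(0)>0$. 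This gives the strict inequality $S(x)<0$ for all $x$, which is the monotonicity claim.

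The main obstacle is the middle step: guaranteeing that the tail never re-enters the region $r>r_0$ on which $V<0$, for otherwise the partial sums could oscillate and the simple sign argument collapses. This is precisely the purpose of the smallness hypothesis $\|Pu\|<c_+$, with $c_+$ taken at the threshold $r_0$; in the application to $\alpha_a$ it is supplied by the preceding lemma, since $\|Pu\|=\|\widehat{\alpha}_a\|\le\|\widehat{\alpha}_a\|_1\le s_*(a)<b_-(a)=r_0$ for $a>a_{4}$, so that $V(u(y))\ge0$ for all $y\ge1$ and $S$ is genuinely monotone after the first step. I would also be careful to derive $S(\infty)\le0$ from $u\in\mathscr{H}$ with $u\ge0$ and $u(x)\to0$ alone, rather than from the sharp decay rate of Theorem \ref{snthm2}, since that estimate is itself established using the monotonicity proved here.
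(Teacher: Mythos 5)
Your argument is correct and reaches the same conclusion, but its logical architecture differs from the paper's. The paper argues by contradiction and induction: starting from the same summed equation, it supposes monotonicity fails at some tail site $x_1>0$, observes that then $\sum_{y=0}^{x_1}V(u(y))\ge 0$, and shows inductively that every subsequent increment stays positive (each new term $V(u(x))$ being positive because the tail values lie in $(0,c_-)$), so $u$ must keep rising until it reaches $c_-$, contradicting $u(x)\le \|Pu\|<c_+<c_-$ on the tail. Your route is direct rather than by perpetual-rise contradiction: $S(x)=\sum_{y=0}^x V(u(y))$ is nondecreasing, its limit is $\le 0$ by the harmonic-divergence argument against $u(x)\to 0$ (which follows from $u\in\mathscr{H}$), hence $S\le 0$ everywhere, with strictness recovered by the rigidity/backward-recursion step. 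Both proofs rest on the same two ingredients --- tail values confined below the sign-change threshold of $V$, and divergence of $\sum (x+1)^{-1}$ --- but your contradiction is against square-summability, while the paper's is against the bound $c_+<c_-$ itself; your version also makes explicit the divergence step that the paper leaves implicit (``will continue to rise at least until $u(x_3)\ge c_-$'' does not by itself rule out a bounded rise), and it cleanly separates non-strict monotonicity from strictness. Two caveats. First, your intermediate claim $S(0)=V(u(0))<0$ requires $V\le 0$ on $(r_0,\infty)$, which is not among the stated hypotheses on $V$ (though true for $V(r)=ar-r^p$); fortunately it is not load-bearing, since $S$ nondecreasing together with $S(\infty)\le 0$ already forces $S(x)\le 0$ for every $x$, including $x=0$, and strictness there follows from your rigidity step. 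Second, like the paper's own proof, you implicitly use positivity of the tail values and a link between the constants and $r_0$ (you need the tail below $r_0$, i.e.\ effectively $c_+\le r_0$; the paper's proof needs $c_-\le r_0$); neither appears in the lemma as stated, but both hold in the application, where $c_-=b_-(a)=r_0$ and $c_+=s_*(a)<c_-$, and your requirement is the weaker of the two, so this is a defect of the statement rather than of your proof.
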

	
\begin{proof}
Let $u$ solve Equation \eqref{monotoneeq}. Let $c_-$ and $c_+$ be constants which satisfy $0 < c_- < u(0)$ and $0 < || P u || < c_+ < \infty$. Furthermore let  $c_+ < c_-$. One has that $u(x) < u(0)$ for all $x > 0$ and in particular $u(1) < u(0)$. Assume that there exists an $0 < x_0 \in \mathbb{Z}_+$ such that $u(x_0 + 1) - u(x_0) = 0$ for $u(x_0),\ u(x_0+1) \in (0, c_-)$. One then has $u(x_0+2) - u(x_0 + 1) > 0$ since $(x_0+2)[u(x_0+2) - u(x_0 + 1)] = V(u(x_0 + 1)) > 0$. Then assume generally that there exists an $0 < x_1 \in \mathbb{Z}_+$ such that $u(x_1 + 1) - u(x_1) > 0$. This gives that $\sum_{y = 0}^{x_1} V(u(y)) > 0$ since $u(x_1+1) - u(x_1) = (x_1+1)^{-1} \sum_{y = 0}^{x_1} V(u(y)) > 0$. One then has that
\begin{align}
	&u(x_1+2) - u(x_1 + 1) = (x_1+2)^{-1} \sum_{y = 0}^{x_1 + 1} V(u(y)) \\
	&\qquad = (x_1+2)^{-1}V(u(x_1 + 1)) + (x_1+2)^{-1} \sum_{y = 0}^{x_1} V(u(y)) > 0
\end{align}
since $V(u(x_1 + 1))$ and $\sum_{y = 0}^{x_1} V(u(y))$ are both positive. If $u(x_1 + 2) \geq c_-$ then one has a contradiction. On the other hand if $u(x_1 + 2) < c_-$ then one may repeat the above process with $x_1$ replaced with $x_1 + 1$. Therefore if there exists an $x_1 > 0$ such that $u(x_1+ 1) - u(x_1) > 0$, the $u(x)$ for the subsequent $x > x_1 + 1$ will continue to rise at least until $u(x_3) \geq c_-$, which is the point greater than which $V(x)$ remains negative, for some $x_3 > x_2+ 1$. One therefore has a contradiction if $u(x)$ fails to be monotonically decreasing as $x \nearrow \infty$ for all $x \in \mathbb{Z}_+$.
\end{proof}

\begin{lem}\label{part02}
Let $\mu_{*} > 0$ be the smallest value such that $\mu_{*} \ge a_{3}$ and such that $s_*(\mu) < b_-(\mu)$ for all $\mu > \mu_{*}$. One has that $\alpha_{\mu}(x)$ decreases monotonically as $x \nearrow \infty$ for all $x \in \mathbb{Z}_+$ and for all $\mu > \mu_{*}$
\end{lem}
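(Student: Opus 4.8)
The plan is to recognize $\alpha_\mu$ as a solution of Equation \eqref{monotoneeq} and then apply the preceding monotonicity criterion verbatim, so that the whole task reduces to exhibiting the two threshold constants $c_-,c_+$ required there. First I would record that, by construction in Definition \ref{part01}, the vector $\alpha_\mu$ solves Equation \eqref{solitoneq} with $a=\mu$, i.e.\ $-L_0\alpha_\mu = \mu\alpha_\mu - \alpha_\mu^p$, which is exactly Equation \eqref{monotoneeq} for the nonlinearity $V(r) = \mu r - r^p = r(\mu - r^{p-1})$. I would then check that this $V$ satisfies the structural hypotheses imposed on the nonlinearity: $V(0)=0$, the second zero sits at $r_0 = \mu^{(p-1)^{-1}} = b_-(\mu)$, and $V(r) > 0$ for $r\in(0,r_0)$. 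The decisive observation is that the sign-change threshold $r_0$ of $V$ is precisely $b_-(\mu)$.

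Next I would assemble the key chain of inequalities, valid for every $\mu > \mu_*$. For the boundary value, $\alpha_\mu(0) = b_*(\mu)$, and since $f_*(\mu,b_-(\mu)) < 0$ forces the unique root $b_*(\mu)$ to lie strictly above $b_-(\mu)$, we have $\alpha_\mu(0) > b_-(\mu)$. For the tail, the elementary estimate $||\cdot|| \le ||\cdot||_1$ together with the uniform $\ell^1$ bound on the iteration limit gives $||P\alpha_\mu|| = ||\widehat{\alpha}_\mu|| \le ||\widehat{\alpha}_\mu||_1 = ||u_*(\mu,b_*(\mu))||_1 \le s_*(\mu)$, while the defining property of $\mu_*$ yields $s_*(\mu) < b_-(\mu)$. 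Stringing these together produces
\begin{align*}
0 < ||P\alpha_\mu|| \le s_*(\mu) < b_-(\mu) < b_*(\mu) = \alpha_\mu(0),
\end{align*}
where strict positivity on the left follows from $\alpha_\mu(1) \ge \theta_1(\mu,b_*(\mu);1) = \widehat{\psi}_{-\mu}(1)\,b_*(\mu)^p > 0$.

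With this chain in hand I would take $c_- := b_-(\mu)$ and any $c_+ \in (s_*(\mu), b_-(\mu))$. Then $0 < c_- < \alpha_\mu(0)$, one has $0 < ||P\alpha_\mu|| \le s_*(\mu) < c_+$, and $c_+ < c_- = b_-(\mu)$, so all three hypotheses of the monotonicity criterion hold for $u = \alpha_\mu$; crucially $c_- = b_-(\mu) = r_0$ is exactly the value beyond which $V$ turns negative, which is the role $c_-$ plays in that lemma's proof, ensuring $V(\alpha_\mu(x)) > 0$ for all $x>0$. Applying the criterion then gives that $\alpha_\mu(x)$ decreases monotonically for all $x\in\mathbb{Z}_+$, and since $\mu > \mu_* \ge a_3$ guarantees $\alpha_\mu$ is defined via Definition \ref{part01}, this holds for every $\mu > \mu_*$.

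The only genuinely delicate point I anticipate is the alignment of two facts: the identification $r_0 = b_-(\mu)$ and the strict separation $||P\alpha_\mu|| \le s_*(\mu) < b_-(\mu) < b_*(\mu)$. The entire argument hinges on the $\ell^2$ size of the tail of $\alpha_\mu$ (controlled through the $\ell^1$ iteration bound $s_*$) lying strictly below the sign-change threshold of $V$, which is exactly what the definition of $\mu_*$ was engineered to secure; everything else is bookkeeping, provided one takes $c_-$ no larger than the zero $r_0$ of $V$.
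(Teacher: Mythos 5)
Your proposal is correct and follows essentially the same route as the paper, whose entire proof is the one-line instantiation of the preceding monotonicity criterion with $u = \alpha_\mu$, $c_- = b_-(\mu)$, $c_+ = s_*(\mu)$. You merely fill in the verifications the paper leaves implicit (the chain $\|P\alpha_\mu\| \le s_*(\mu) < b_-(\mu) < b_*(\mu) = \alpha_\mu(0)$, the identification of $b_-(\mu)$ with the zero $r_0$ of $V$), and your choice of $c_+$ strictly between $s_*(\mu)$ and $b_-(\mu)$ is a harmless refinement that sidesteps having to check the strict inequality $\|P\alpha_\mu\| < s_*(\mu)$.
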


\begin{proof}
Consider Lemma 4.1 and let $u = \alpha_{\mu}$, $c_- = b_-(\mu)$, and $c_+ = s_*(\mu)$ for all $\mu > \mu_{*}$.
\end{proof}

\begin{proof}[Proof of Theorem \ref{snthm04}]
Existence and Property (1) are given by Definition \ref{part01} and arguments on which the definition depends. Property (2) is given by Lemma \ref{part02}. Property (3) is given by the fact that the existence and construction of $\alpha_\mu$ requires $\ell^1$ boundedness.
\end{proof}

\begin{proof}[Proof of Proposition \ref{snprop02}]
One may use the convergence for the iteration of the $s_{n}(\mu)$ functions for large values of $\mu$ to directly verify:
\begin{align}
	||\widehat{\alpha}_{\mu}||_{1} \le s(\mu) = \mu^{-1}(\mu+1)^{(p-1)^{-1}}+\mu s^{p}(\mu) = \mu^{-(p-1)^{-1}(p-2)} + \mathcal{O}(\mu^{-(p-1)^{-1}(2p-3)}) .
\end{align}
\end{proof}

\subsection{Asymptotic behavior}

In \cite{CFW} the asymptotic behavior of solutions of the finite difference nonsingular Sturm-Liouville problem $P (L_0 - \lambda) P u = 0$ were studied with various boundary conditions. One solution, $\phi_\lambda$, has the well-known asymptotic behavior
\begin{align}
	\phi_{\lambda}(x) \sim e^{\lambda/2}J_0(2\sqrt{\lambda x}) \quad \text{as} \quad x \nearrow \infty,
\end{align}
where $J_0(z)$ is the Bessel function of the first kind of degree $0$. They studied a particular solution, which they call $\Psi_{\lambda}$, which is a linear combination of $\phi_\lambda$ and $\psi_{\lambda}$. It satisfies the boundary conditions $\Psi_{\lambda}(0) = 0$ and $\Psi_{\lambda}(1) = 1$. $\Psi_{\lambda}$ was shown to have the asymptotic behavior
\begin{align}
	\Psi_{\lambda}(x) &\sim \pi e^{-\lambda/2}Y_0(2\sqrt{\lambda x}) + e^{\lambda/2}\mathcal{P}E_1(-\lambda)J_0(2\sqrt{\lambda x}) \quad \text{as} \quad x \nearrow \infty,
\end{align}
where $Y_0(z)$ is the Bessel function of the second kind of degree 0.
\begin{lem}
One has the asymptotic decay rate
\begin{align}
	\psi_{-a}(x) \sim 2e^{a/2}K_0(2\sqrt{ax}) \sim e^{a/2}\pi^{1/2}(ax)^{-1/4}e^{-2\sqrt{ax}} \quad \text{as} \quad x \nearrow \infty,
\end{align}
where $K_0(z)$ is the modified Bessel function of the second kind of degree 0.
\end{lem}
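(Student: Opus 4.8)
The plan is to start from the closed integral representation already obtained in the proof of the lemma $\|\psi_{-a}\|_1 = a^{-1}$, namely
\begin{align}
	\psi_{-a}(x) = e^a \int_1^\infty \mathrm{d}t\ e^{-at}\,t^{-1}\,(1-t^{-1})^x,
\end{align}
valid for every $a>0$ and $x \in \mathbb{Z}_+$ (it follows by inserting the integral presentation of $E_{k+1}(a)$ and summing the binomial series, exactly as in that lemma). Writing $(1-t^{-1})^x = e^{x\log(1-t^{-1})}$, the integrand becomes $t^{-1}e^{\phi(t)}$ with $\phi(t) := -at + x\log(1-t^{-1})$, so the problem reduces to a Laplace/steepest-descent estimate of $\int_1^\infty t^{-1}e^{\phi(t)}\,\mathrm{d}t$ as $x \nearrow \infty$. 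First I would locate the critical point: $\phi'(t) = -a + x/[t(t-1)]$ vanishes exactly when $t(t-1)=x/a$, giving the unique interior maximizer $t_* = \tfrac12\bigl(1+\sqrt{1+4x/a}\bigr) \in (1,\infty)$, with $\phi'>0$ for $t<t_*$, $\phi'<0$ for $t>t_*$, and $\phi''(t_*)<0$. Since the integrand vanishes at both endpoints --- through $(1-t^{-1})^x$ as $t \searrow 1$ and through $e^{-at}$ as $t \nearrow \infty$ --- the mass concentrates at $t_*$ and Laplace's method applies.

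Next I would expand the ingredients about $t_*$ using $t_* = \sqrt{x/a} + \tfrac12 + O(x^{-1/2})$, forced by solving $t_*^2 - t_* = x/a$ asymptotically. The delicate point, and the main obstacle, is that the \emph{constant-order} term of $\phi(t_*)$ must be retained: a careful expansion gives $\phi(t_*) = -2\sqrt{ax} - a/2 + o(1)$, where the $-a/2$ arises precisely from the $+\tfrac12$ shift in $t_*$ and is exactly what produces the stated $e^{a/2}$ prefactor. A leading-order estimate that drops it returns the wrong exponential constant $e^a$, so computing $\phi(t_*)$ to $o(1)$ accuracy, together with a rigorous control of the Laplace remainder (Gaussian approximation near $t_*$ plus decaying tails), is where the real work lies. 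Combining this with $t_*^{-1}\sim (a/x)^{1/2}$, $|\phi''(t_*)| = (2t_*-1)a^2/x \sim 2a^{3/2}x^{-1/2}$, the Gaussian normalization $\sqrt{2\pi/|\phi''(t_*)|}$, and the overall factor $e^a$ yields
\begin{align}
	\psi_{-a}(x) \sim e^a\,t_*^{-1}\,e^{\phi(t_*)}\sqrt{\frac{2\pi}{|\phi''(t_*)|}} \sim e^{a/2}\,\pi^{1/2}(ax)^{-1/4}e^{-2\sqrt{ax}} \quad \text{as } x \nearrow \infty.
\end{align}

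Finally I would recognize the closed Bessel form. Using the standard large-argument asymptotic $K_0(z) \sim (\pi/2z)^{1/2}e^{-z}$ with $z = 2\sqrt{ax}$ gives $2e^{a/2}K_0(2\sqrt{ax}) \sim e^{a/2}\pi^{1/2}(ax)^{-1/4}e^{-2\sqrt{ax}}$, which matches the expression just derived and so establishes the first equivalence as well. Alternatively, one can identify $K_0$ intrinsically: away from $x=0$ the vector $\psi_{-a}$ solves $(L_0+a)u=0$, whose two independent solutions carry the Bessel-type asymptotics seen for $\phi_\lambda$ and $\Psi_\lambda$ continued to $\lambda=-a<0$; since $\psi_{-a}\in\ell^1$ it must be the recessive, $K_0$-type solution, and its constant is then pinned down by the Laplace asymptotic above.
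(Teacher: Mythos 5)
Your proposal is correct, and it takes a genuinely different route from the paper. The paper's proof is an identification argument: it takes the $J_0$/$Y_0$ asymptotics of the two fundamental solutions $\phi_\lambda$ and $\Psi_\lambda$ quoted from \cite{CFW}, analytically continues them to $\lambda=-a<0$ (where $J_0$ becomes the growing $I_0$ and the decaying branch is $K_0$), and selects the unique monotonically decaying combination, asserting that the constant $2e^{a/2}$ then follows ``straightforwardly.'' It is short but leans on the validity of continuing asymptotic expansions in $\lambda$ and never actually computes the prefactor. You instead run Laplace's method directly on the integral representation $\psi_{-a}(x)=e^a\int_1^\infty \mathrm{d}t\ e^{-at}t^{-1}(1-t^{-1})^x$, which the paper itself derives and uses elsewhere, and your numbers all check: the critical point $t_*=\tfrac12\bigl(1+\sqrt{1+4x/a}\bigr)$, the crucial constant-order term $\phi(t_*)=-2\sqrt{ax}-a/2+o(1)$ (writing $u=\sqrt{1+4x/a}$ one gets $\phi(t_*)=-\tfrac{a}{2}-au+O(1/u)$ exactly, confirming the $-a/2$), the curvature $|\phi''(t_*)|\sim 2a^{3/2}x^{-1/2}$, and the final assembly $e^{a/2}\pi^{1/2}(ax)^{-1/4}e^{-2\sqrt{ax}}$, after which $K_0$ is recognized from its large-argument form. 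Your approach is more self-contained and quantitatively honest — it is the only one of the two that actually produces the constant — at the cost of one technical point you rightly flag: the maximum moves ($t_*\sim\sqrt{x/a}\nearrow\infty$), so the standard fixed-phase Laplace theorem does not apply verbatim and one must rescale (e.g.\ $t=\sqrt{x/a}\,\tau$) and control the Gaussian remainder and tails uniformly. Note also that your closing ``alternative'' identification (recessive $K_0$-type solution of $(L_0+a)u=0$ pinned down by decay) is essentially the paper's entire proof, so you have in effect subsumed it.
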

\begin{proof}
One can determine the asymptotic behavior of $\psi_{-a}(x)$ as $x \nearrow \infty$ by finding the appropriate linear combination of Bessel functions such that upon analytic continuation, the real part is of the appropriate linear combination of the asymptotic forms of $\phi_\lambda(x)$ and $\Psi_{\lambda}(x)$ is monotonically decreasing as $x \nearrow \infty$. This combination must be monotonically decaying for $\lambda = -a < 0$ one can straightforwardly determine that the asymptotic behavior must be of the desired form.
\end{proof}

\begin{lem}
$\phi_{-a}(x), \psi_{-a}(x) > 0$ for all $x \in \mathbb{Z}_{+}$ and $a > 0$. $\phi_{-a}(x)$ is monotonically increasing and $\psi_{-a}(x)$ is monotonically decreasing in increasing $x$ for all $a > 0$.
\end{lem}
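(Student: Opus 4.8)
The plan is to establish positivity and monotonicity directly from the explicit series and integral representations developed earlier, rather than by any abstract oscillation-theoretic argument. First I would treat $\psi_{-a}$. The representation from Lemma 5.2 of \cite{paper 01} together with the substitution used in the $\|\psi_{-a}\|_1 = a^{-1}$ computation gives the pointwise formula
\begin{align}
	\psi_{-a}(x) = e^{a}\int_1^\infty \mathrm{d}t\ e^{-at}t^{-1}(1-t^{-1})^x,
\end{align}
where I have collapsed the binomial sum $\sum_{k=0}^x(-1)^k\binom{x}{k}t^{-k} = (1-t^{-1})^x$ exactly as in the proof that $\|\psi_{-a}\|_1 = a^{-1}$. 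On the integration range $t \in [1,\infty)$ one has $0 \le 1 - t^{-1} < 1$, so the integrand is nonnegative and strictly positive on a set of positive measure; hence $\psi_{-a}(x) > 0$ for every $x$. Monotone decrease then follows from the same formula: the integrand at level $x+1$ is the integrand at level $x$ multiplied by the factor $(1-t^{-1}) \in [0,1)$, so the integrand decreases pointwise in $x$ and strictly so on $(1,\infty)$, giving $\psi_{-a}(x+1) < \psi_{-a}(x)$ by monotonicity of the integral.

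Next I would handle $\phi_{-a}$. Using the explicit Laguerre form from Proposition 1 of \cite{paper 01}, namely $\phi_\lambda(x) = \sum_{k=0}^x \frac{(-\lambda)^k}{k!}\binom{x}{k}$, the choice $\lambda = -a < 0$ makes every term $\frac{a^k}{k!}\binom{x}{k}$ strictly positive, so $\phi_{-a}(x) > 0$ is immediate. For strict monotone increase I would use the telescoping that comes from Pascal's rule $\binom{x+1}{k} = \binom{x}{k} + \binom{x}{k-1}$, which yields
\begin{align}
	\phi_{-a}(x+1) - \phi_{-a}(x) = \sum_{k=1}^{x+1}\frac{a^k}{k!}\binom{x}{k-1} > 0,
\end{align}
since each summand is strictly positive; this closes the monotonicity for $\phi_{-a}$ without any asymptotic input.

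The only genuine subtlety I anticipate is bookkeeping at the boundary site $x=0$ and the first step $x=0 \to 1$, where the difference operator $L_0$ takes its exceptional form and where the binomial identities must be checked to hold with the correct edge conventions (e.g. $\binom{x}{x+1}=0$). These are routine once the empty-sum and zero-binomial conventions are fixed, so they do not represent a real obstacle. One could alternatively derive both monotonicity statements from the second-order recurrences $L_0\phi_{-a} = -a\,\phi_{-a}$ and $(L_0+a)\psi_{-a} = \chi_0$, reading off the sign of the first differences inductively; but the closed-form representations above make the signs manifest in one line each, so I would prefer them as the cleaner route.
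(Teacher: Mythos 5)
Your proposal is correct and follows essentially the same route as the paper: the paper likewise proves positivity and monotone decrease of $\psi_{-a}$ from the integral representation $\psi_{-a}(x) = e^{a}\int_1^\infty \mathrm{d}t\, e^{-at}t^{-1}(1-t^{-1})^x$ (computing the difference explicitly as $-e^{a}\int_1^\infty \mathrm{d}t\, e^{-at}t^{-2}(1-t^{-1})^x < 0$, which is exactly your pointwise factor-of-$(1-t^{-1})$ observation), and handles $\phi_{-a}$ ``by inspection'' of the positive-coefficient sum $\sum_{k=0}^x \frac{a^k}{k!}\binom{x}{k}$. Your Pascal's-rule telescoping simply makes explicit the monotonicity step the paper leaves to inspection, so there is no substantive difference.
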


\begin{proof}
One can see that $\phi_{-a}(x)$ is positive and monotonically increasing by inspection of $\phi_{-a}(x) = \sum_{k=0}^x \frac{a^k}{k!}\binom{x}{k}$. One can observe that $\psi_{-a}(x)$ is positive for all $x,a$
\begin{align}
	\psi_{-a}(x) = e^a \sum_{k=0}^x(-1)^k\binom{x}{k}\mathrm{E}_{k+1}(a) = e^a \int_1^\infty\mathrm{d}t\ e^{-at}t^{-1}\left(1-t^{-1}\right)^x>0
\end{align}
as well as monotonically decreasing
\begin{align}
	\psi_{-a}(x+1) -\psi_{-a}(x) = - e^a \int_1^\infty\mathrm{d}t\ e^{-at}t^{-2}\left(1-t^{-1}\right)^x<0 .
\end{align}
\end{proof}

\begin{lem}
The resolvent $R^{L_0}_z$ has the Sturm-Liouville (SL) representation
\begin{align}
	R^{L_0}_z(x_{1},x_{2}) = \left\{
	\begin{array}{cr}
		\phi_z(x_{1}) \psi_z(x_{2}),\ \  x_{1}\le x_{2}\\
		\phi_z(x_{2}) \psi_z(x_{1}),\ \  x_{1}\ge x_{2}
	\end{array}\right. .
\end{align}
\end{lem}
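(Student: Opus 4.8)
The plan is to identify the proposed kernel with that of the resolvent by the standard Sturm--Liouville (Green's function) construction for the second-order difference operator $L_0$, fixing the overall constant by a discrete Wronskian computation. Write $g_{x_2}$ for the column $x_1 \mapsto \phi_z(\min(x_1,x_2))\,\psi_z(\max(x_1,x_2))$, i.e. the right-hand side of the claim. First I would record the two facts that drive everything. On the one hand, $\phi_z$ solves the homogeneous equation $(L_0-z)\phi_z = 0$ at \emph{every} site: this is the recurrence $L_0\phi_z = z\phi_z$, and one checks it holds at the boundary site as well, since $\phi_z(0)=1$ and $\phi_z(1)=1-z$ give $-\phi_z(1)+(1-z)\phi_z(0)=0$. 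On the other hand, $\psi_z = R^{L_0}_z\chi_0$ satisfies $(L_0-z)\psi_z = \chi_0$, so $[(L_0-z)\psi_z](x)=0$ for all $x>0$, together with the boundary relation $-\psi_z(1)+(1-z)\psi_z(0)=1$ read off at $x=0$.

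With these in hand I would fix $x_2$ and compute $(L_0-z)_{x_1}g_{x_2}(x_1)$ in three regimes. For $x_1<x_2$ the three sites $x_1-1,x_1,x_1+1$ all lie in the branch $\phi_z(\cdot)\psi_z(x_2)$, so the expression is $\psi_z(x_2)\,[(L_0-z)\phi_z](x_1)=0$; the same computation covers the boundary site $x_1=0$ using the $x=0$ form of $L_0$. For $x_1>x_2$ the three sites lie in the branch $\phi_z(x_2)\psi_z(\cdot)$, giving $\phi_z(x_2)\,[(L_0-z)\psi_z](x_1)=0$ since $x_1\ge 1$. The only possibly nonzero contribution is the diagonal $x_1=x_2=:x$, where the two branches meet. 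Comparing the actual expression against the ``pure-$\psi_z$'' expression (which vanishes for $x\ge 1$) isolates the mismatch in the $x\,g_x(x-1)$ term and yields $(L_0-z)g_x(x)= x\,[\phi_z(x)\psi_z(x-1)-\phi_z(x-1)\psi_z(x)]$; the boundary diagonal $x=0$ is handled directly and equals $-\psi_z(1)+(1-z)\psi_z(0)=1$.

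The key step is to recognize the diagonal quantity as the discrete Wronskian $W:=(x+1)\,[\phi_z(x+1)\psi_z(x)-\phi_z(x)\psi_z(x+1)]$, which is constant in $x$ because both $\phi_z$ and $\psi_z$ solve the homogeneous three-term recurrence (a one-line telescoping check). I would then evaluate $W$ at $x=0$, where $\phi_z(0)=1$, $\phi_z(1)=1-z$, and $\psi_z(1)=(1-z)\psi_z(0)-1$ combine to give $W=\phi_z(1)\psi_z(0)-\phi_z(0)\psi_z(1)=1$ exactly. Hence $(L_0-z)g_{x_2}=\chi_{x_2}$ pointwise, the normalization $\phi_z(0)=1$ together with the boundary relation for $\psi_z$ being precisely what forces the coefficient to be $1$ rather than $1/W$.

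Finally I would upgrade this pointwise identity to the operator statement $g_{x_2}=R^{L_0}_z\chi_{x_2}$. Since $z\in\rho(L_0)$ lies off $\sigma(L_0)=[0,\infty)$, the vector $\psi_z\in\mathcal{D}(L_0)\subseteq\mathscr{H}$ decays, so each column $g_{x_2}$, which is a constant multiple of $\psi_z$ beyond $x_2$, lies in $\mathscr{H}$ and in $\mathcal{D}(L_0)$ (using $\|M\psi_z\|<\infty$). Self-adjointness of $L_0$ then gives uniqueness of the solution of $(L_0-z)v=\chi_{x_2}$ in the domain, whence $g_{x_2}=R^{L_0}_z\chi_{x_2}$, i.e. $g_{x_2}(x_1)=R^{L_0}_z(x_1,x_2)$, which is the claim. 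I expect the only real care to be the bookkeeping at the diagonal and at the boundary site $x=0$, and the verification that the chosen normalizations make $W=1$; the remainder is the routine Green's-function mechanism.
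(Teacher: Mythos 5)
Your proof is correct, and it takes a genuinely different route from the paper's. The paper does not verify the Green's-function identity directly: it first restricts to the truncated operator $P L_0 P$, which is a \emph{nonsingular} second-order finite difference Sturm--Liouville operator on the sites $x>0$, invokes the general theorem of discrete SL theory (if $u_z, v_z$ solve the homogeneous equation and $\eta(x)W[u_z,v_z](x)=1$, then the resolvent kernel is the ordered product $u_z(\min)v_z(\max)$) with $u_z = P\phi_z$, $v_z = P\psi_z$, and then spends its effort on the singular boundary point: it argues that a second boundary condition can be simulated by an inhomogeneous source $q\chi_0$ (with $q=1$ producing $\psi_z$), so that $P R^{L_0}_z P = P R^{PL_0P}_z P$, and checks the $x=0$ entries of the kernel by hand. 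You instead verify pointwise that $(L_0-z)g_{x_2}=\chi_{x_2}$ on the \emph{full} half-lattice, treating the boundary site on the same footing as the rest via the relations $\phi_z(1)=(1-z)\phi_z(0)$ and $-\psi_z(1)+(1-z)\psi_z(0)=1$, and close with a uniqueness argument from $z\in\rho(L_0)$ and $g_{x_2}\in\mathcal{D}(L_0)$. Your version buys self-containedness: it avoids the somewhat delicate ``simulated boundary condition'' discussion, and it makes explicit two things the paper leaves implicit --- the normalization check that the conserved Casoratian quantity equals exactly $1$ (which is precisely the hypothesis $\eta(x)W=1$ needed to apply the cited SL theorem), and the domain/uniqueness step ($\|M\psi_z\|<\infty$, injectivity of $L_0-z$) that upgrades a pointwise Green's function to the actual resolvent kernel of the self-adjoint operator. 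What the paper's approach buys is conceptual economy: it situates the lemma inside standard Jacobi-operator theory rather than redoing the telescoping computation, at the cost of the extra bookkeeping relating $R^{L_0}_z$ to $R^{PL_0P}_z$.
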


\begin{proof}
The operator $L_0$ on $\mathscr{H}$ is a singular, second order, finite difference Sturm-Liouville operator. This is made manifest when put into SL form, $L_0 = D_+MD_-$, where $D_{+}, D_{-}$ are the respectively the usual forward and backward finite difference operators
\begin{align}
	&D_+v(x)=v(x+1)-v(x)\\
	&D_-v(x)=\left\{
	\begin{array}{cr}
		v(x)-v(x-1) &,\ x>0\\
		v(x) &,\ x=0,
	\end{array}\right.
\end{align}
and $M$ is the lattice index multiplication operator $Mv(x) = xv(x)$ for all $v \in \mathscr{T}$.
	
$L_{0}$ is singular at the boundary point $x=0$. When its domain and range are restricted to functions only on lattice points for $x > 0$ it is the case that $L_{0}$ is a non-singular operator. This restricted operator, $P L_{0} P$, is second order and nonsingular therefore $P L_0 P u = z u$, $u \in \mathscr{T}$, admits two linearly independent solutions which satisfy linearly independent boundary conditions.

The finite difference Wronskian, also known as the Cassoratian, of two vectors $u,v\in\mathscr{T}$ is given by (see e.g. \cite{Casoratian})
\begin{align}
	W[u,v](x) = u(x)v(x+1) - u(x+1)v(x) .
\end{align}
A Jacobi operator $A \in \mathcal{L}(\mathscr{T})$ can be brought into the form
\begin{align}
	Av(x) = \eta(x)v(x+1) + \omega(x)v(x) +\eta(x-1)v(x-1),\quad \eta(x),\omega(x) \in \mathbb{R}.
\end{align}
As is specified by the finite difference Sturm-Liouville theory, if $A$ is a Sturm-Liouville operator, $u_z,v_z$ are linearly independent solutions of $Au=zu$, and $\eta(x)W[u_z,v_z](x) = 1$ $\forall x$ then
\begin{align}
	R^A_z(x_{1},x_{2}) = \left\{
	\begin{array}{cr}
		u_z(x_{1}) v_z(x_{2}),\ \  x_{1}\le x_{2}\\
		u_z(x_{2}) v_z(x_{1}),\ \  x_{1}\ge x_{2}
	\end{array}\right. .
\end{align}
This construction follows for $A = P L_{0} P$, $u_{z} = P\phi_{z}$, $v_{z} = P\psi_{z}$. Since $\phi^{L_{0}}_{\lambda}(x)$ is a polynomial of degree $x$ in $\lambda$, one has that the analytic continuation $\phi^{L_{0}}_{z} \in \mathscr{T}$, $z \in \mathbb{C}$, exists. Since $L_{0}$ is singular at $x = 0$ one cannot adjust boundary conditions any further than fixing a scale factor for spectral solutions. One can simulate a second boundary condition with the introduction of either a linear perturbation or of an inhomogeneous source supported at the singular point. This is to say that one can respectively consider the equations
\begin{align}
	(L_{0} - qP_{0})u = zu \quad \text{or} \quad L_{0}u = zu + q\chi_{0}
\end{align}
where $q \in \mathbb{C}$ is a parameter, the tuning of which simulates the tuning of a second boundary condition. By taking the latter form with $q=1$ one may arrive at $\psi_{z}$ for the second solution. It therefore must be the case that $P R^{L_{0}}_{z}P = P R^{P L_{0} P}_{z}P$ with the above prescription and $R^{L_{0}}_{z} = R^{P L_{0} P}_{z}$ may be checked for $x = 0$ directly at boundary values.
\end{proof}
\noindent One may observe that one has $R^{L_0}_{-a}(x_{1},x_{2}) > 0$ for all $a>0$ and $0\leq x_{1},x_{2} \in \mathbb{Z}$.

\begin{proof}[Proof of Theorem \ref{snthm2}]

Consider $a > 0$ a fixed constant. Let $P_{\le x_*} := \sum_{x = 0}^{x_*} P_x$ and $P_{> x_*} := I - P_{\le x_*}$ for some $x_* \in \mathbb{Z}_+$. Let $q := ||P_{> x_*}V(u)  ||_\mathrm{op} = ||P_{> x_*}V(u)  ||_\infty$. Furthermore let $x_*$ satisfy $0 < q < a$ for all $a > 0$. One may find
\begin{align}
	(L_0+a)u &= V(u)u \\
	\Rightarrow \quad u &= [L_0 + a - P_{> x_*(a)}V(u)]^{-1} P_{\leq x_*(a)}u \\
		&= R^{L_0}_{-a}[1 - P_{> x_*}V(u) R^{L_0}_{-a}]^{-1} P_{\leq x_*}u \\
		&= R^{L_0}_{-a} \sum_{n = 0}^\infty [P_{> x_*}V(u) R^{L_0}_{-a}]^n P_{\leq x_*}u
\end{align}
where the sum converges absolutely. One has from the Sturm-Liouville form of the resolvent
\begin{align}
	&R^{L_0}_{-a}\chi_x \leq \phi_{-a}(x)R^{L_0}_{-a}\chi_0 = \phi_{-a}(x)\psi_{-a},
\end{align}
for all $a > 0$ and all $x \in \mathbb{Z}_+$. One then has, for all $a > 0$ and all $x \in \mathbb{Z}_+$
\begin{align}
	u(x) &\leq R^{L_0}_{-a} \sum_{n = 0}^\infty \left[q(u,x_*(a)) R^{L_0}_{-a}\right]^n P_{\leq x_*(a)}u(x) \\
		&\leq \left[ R^{L_0}_{-a + q} \sum_{y=0}^{x_*} u(y)\chi_y \right](x) \\
		&\leq \sum_{y=0}^{x_*} u(y)\phi_{-a + q}(y)\psi_{-a + q}(x)
\end{align}
One may therefore conclude
\begin{align}
	u(x) \leq c \psi_{-a + q}(x),
\end{align}
where $c = \sum_{y=0}^{x_*} u(y)\phi_{-a + q}(y) < \infty$. One has that $\psi_{-a + q}(x) \sim c' x^{-1/2}e^{-2\sqrt{(a - q)x}}$ as $x \nearrow \infty$ with $a$ fixed, where $c'$ is a constant that depends on $a - q$ alone. By inspection the appropriate constants $0 < c_0, c_1 < \infty$ may be found for each fixed $a > 0$.

\end{proof}

\thanks{We thank Marius Beceanu for helpful discussions. This work was supported in part by NSF grant DMS 1201394}

\end{document}